\date{}
\begin{document}

\title{Simplest Non-Regular Deterministic Context-Free Language}
\author[1]{Petr Jan\v{c}ar}
\author[2]{Ji\v{r}\'{\i} \v{S}\'{\i}ma}
\affil[1]{Dept of Computer Science, Faculty of Science, Palack\'y University Olomouc,
Czechia\par 
    \texttt{petr.jancar@upol.cz}}
\affil[2]{Institute of Computer Science of the Czech Academy of Sciences, Prague, 
Czechia\par
    \texttt{sima@cs.cas.cz}}

\renewcommand{\Affilfont}{\itshape\small}
\maketitle           

\bibliographystyle{splncs04}

\begin{abstract}
We introduce a new notion of $\mathcal{C}$-sim\-ple problems	for a class $\mathcal{C}$ of decision problems (i.e. languages), w.r.t. a particular
	reduction.
A problem is $\mathcal{C}$-simple if it can be reduced to each problem in $\mathcal{C}$. This can be viewed as a conceptual
	counterpart to $\mathcal{C}$-hard problems to which all
	problems in $\mathcal{C}$ reduce. Our concrete example is
	 the class of \emph{non-regular} deterministic
	context-free languages (\nrdcfl), with 
	a truth-table reduction by 
	Mealy machines
	(which proves to be a preorder).
 The main technical
result is a proof that the {\nrdcfl} language
	$L_\#=\{0^n1^n\mid n\geq 1\}$ is \nrdcfl-simple,
which can thus be viewed as the simplest problem in the class \nrdcfl.

	This result has already provided an application, to the computational
	model of neural networks 1ANN at the first level of analog
	neuron hierarchy. This model was proven not to recognize
	$L_\#$, by
	using a specialized technical argument that can hardly be
	generalized to other languages in \nrdcfl. 
By the result that $L_\#$ is \nrdcfl-simple, w.r.t. the reduction that
	can be implemented by 1ANN,
	we immediately obtain that
 1ANN cannot accept any
language in \nrdcfl.

	It thus seems worthwhile to explore if looking for
	$\mathcal{C}$-simple problems in other classes $\mathcal{C}$
	under suitable reductions could provide effective tools for expanding the lower-bound results known for single problems to the whole classes of problems.

\medskip

\emph{Keywords}:
deterministic context-free language, truth-table reduction, 
Mealy automaton, pushdown automaton
\end{abstract}

\section{Introduction}

We introduce a new notion of 
\emph{$\mathcal{C}$-simple problems}
for a class $\mathcal{C}$ of decision problems (i.e. languages). A problem is $\mathcal{C}$-simple 
if it can be reduced to each problem in~$\mathcal{C}$; if this problem is, moreover, in $\mathcal{C}$, it can be viewed as a simplest problem in $\mathcal{C}$.  
The $\mathcal{C}$-simple problems are thus a conceptual counterpart
to the common $\mathcal{C}$-hard problems (like, e.g., NP-hard
problems) to which conversely any problem in $\mathcal{C}$ reduces.
These definitions (of $\mathcal{C}$-simple and $\mathcal{C}$-hard
problems) are parametrized by a
chosen reduction
that does not have a higher computational complexity than the class
$\mathcal{C}$ itself.
Therefore, it may be said that if a $\mathcal{C}$-hard problem has a
(computationally) ``easy'' solution, then each problem in
$\mathcal{C}$ has an ``easy'' solution. On the other hand, if we prove
that a $\mathcal{C}$-simple problem is not ``easy'', in particular
that it
cannot be solved by machines of a type $\calM$ that can implement the
respective reduction, then all problems in $\mathcal{C}$ are not 
``easy'', that is, are not
solvable by $\calM$; this extends a lower-bound result for one problem to the whole class of problems.

In this paper, we consider $\mathcal{C}$ to be the class of
non-regular deterministic context-free languages, which we denote by
\nrdcfl; we thus have \nrdcfl\,=\,DCFL\,$\smallsetminus$\,REG
(where REG denotes the class of regular languages).
We use
a~truth-table reduction by Mealy machines (which is motivated below).
Hence  a~{\nrdcfl}-simple problem is a language $L_0\subseteq\Sigma^*$ (over
an alphabet $\Sigma$) that can be reduced to each {\nrdcfl}
language $L\subseteq\Delta^*$ by a Mealy machine $\mathcal{A}$
 with an oracle $L$, denoted $\mathcal{A}^L$.
More precisely, the finite-state transducer $\mathcal{A}$ transforms a
given input word $w\in\Sigma^*$ to a prefix
$\mathcal{A}(w)\in\Delta^*$ of queries for the oracle $L$. 
In addition, each state $q$ of $\mathcal{A}^L$ is associated with a
finite tuple $\sigma_q=(s_{q1},\ldots,s_{qr_q})$ of $r_q$ query
suffixes from $\Delta^*$, and with a truth table $f_q:\{0,1\}^{r_q}\rightarrow\{0,1\}$. 
After $\mathcal{A}^L$ reads an input word~$w$ (translating it to
$\mathcal{A}(w)$), by which it enters a state $q$,
for each $i\in\{1,2,\dots,r_q\}$ it queries whether or not the string
$\mathcal{A}(w)\cdot s_{qi}$ is in $L$
(or, equivalently, whether or not $\mathcal{A}(w)$ belongs to the
quotient $L/s_{qi}=\{v\in\Delta^*\mid v\cdot s_{qi}\in L\}$), 
 and aggregates the answers by the truth table $f_q$ for deciding if
 $w$ is accepted.

This truth-table reduction by Mealy machines proves to be a preorder,
denoted as $\mpred$.
The main technical result of this paper is that the 
{\nrdcfl} language $L_\#=\{0^n1^n\mid n\geq 1\}$ 
(over the binary alphabet $\{0,1\}$) 
is \nrdcfl-simple, since $L_\#\mpred L$ for each 
language $L$ in \nrdcfl. 
The class DCFLS of \nrdcfl-simple languages
comprises REG 
and is
a strict subclass of DCFL; e.g., the {\nrdcfl} language
$L_R=\left\{wcw^R\mid w\in\{a,b\}^*\right\}$ over the alphabet
$\{a,b,c\}$ proves to be not \nrdcfl-simple. The closure properties of
DCFLS are similar to that of DCFL as the class DCFLS is closed under
complement and intersection with regular languages, while being not closed under concatenation, intersection, and union.

The above definition of \nrdcfl-simple problems has originally been
motivated by the analysis of the computational power of neural network
(NN) models which is known to depend on the (descriptive) complexity
of their weight parameters~\cite{Siegelmann99,SimaO03}. The so-called
analog neuron hierarchy~\cite{Sima20} of binary-state NNs with
increasing number of $\alpha$ extra analog-state neurons, denoted as
$\alpha$ANN for $\alpha\geq 0$, has been introduced for studying NNs
with realistic weights between integers (finite automata) and rational
numbers (Turing machines). We use the notation $\alpha$ANN also for the class of languages accepted by $\alpha$ANNs, which can clearly be distinguished by the context. The separation 1ANN~$\subsetneq$~2ANN has
been witnessed by the {\nrdcfl} language $L_\#\in$~2ANN~$\setminus$~1ANN. The proof of $L_\#\notin$~1ANN is rather technical (based on the Bolzano-Weierstrass theorem) which could hardly be generalized to other {\nrdcfl} languages, while it was conjectured that $L\notin$~1ANN for all {\nrdcfl} languages $L$, that is,
$\mbox{\nrdcfl}\subseteq(\mbox{2ANN}\,\setminus\,\mbox{1ANN})$
(implying $\mbox{1ANN}\,\cap\,\mbox{DCFL}=\mbox{0ANN}=\mbox{REG}$). 
An idea how to prove this conjecture is to show that $L_\#\notin$~1ANN
is in some sense the simplest problem in the class
$\mbox{\nrdcfl}$, namely, to reduce $L_\#$ to any
{\nrdcfl} language $L$ by using a reduction that can be carried
out by 1ANNs, 
which are at least as powerful as finite automata.
This would imply that $L$ cannot be accepted by any 1ANN since it is
at least as hard as $L_\#$ that has been proven not to be recognized by 1ANNs. 

The idea why $L_\#$ should serve as the simplest language in the class
{\nrdcfl} comes from the fact that any reduced context-free grammar $G$ generating a non-regular language
$L\subseteq\Delta^*$ is
\emph{self-embedding}~\cite[Theorem~4.10]{HoUl69}. This means that
there is a so-called self-embedding nonterminal $A$ admitting the
derivation $A\Rightarrow^*xAy$ for some non-empty strings
$x,y\in\Delta^+$. Since $G$ is reduced, there are strings
$v,w,z\in\Delta^*$ such that $S\Rightarrow^*vAz$ and $A\Rightarrow^*w$
where $S$ is the start nonterminal in $G$, which implies
$S\Rightarrow^*vx^mwy^mz\in L$ for every $m\geq 0$. It is thus
straightforward to suggest to reduce an input word
$0^m1^n\in\{0,1\}^*$ where $m,n\geq 1$, to the string
$vx^mwy^nz\in\Delta^*$ (while the inputs outside
$0^+1^+$ are mapped onto some fixed string
outside $L$) since $0^m1^n\in L_\#$ entails $vx^mwy^nz\in L$.

However, the suggested (one-one) reduction from $L_\#$ to $L$ is not consistent
because $vx^mwy^nz\in L$ does not necessarily imply $0^m1^n\in L_\#$.
For example, consider the {\nrdcfl} language $L_1=\{0^m1^n\mid
1\leq m\leq n\}$ over the binary alphabet $\Delta=\{0,1\}$ for which
there are no words $v,x,w,y,z\in\Delta^*$ such that $vx^mwy^nz\in L_1$
would ensure $m=n$. Nevertheless, we can pick two inputs $0^m1^{n-1}$
and $0^m1^n$ instead of one, that is, $x=0$, $y=1$, and
$v=w=z=\varepsilon$ ($\varepsilon$ denoting the empty string), which satisfy $0^m1^n\in L_\#$ if{f} $m=n$ if{f} $vx^mwy^{n-1}z\notin L_1$ and $vx^mwy^nz\in L_1$. 
It turns out that
this can be generalized to any {\nrdcfl}
language. Namely, we prove in this paper that for {\nrdcfl} language $L\subseteq\Delta^*$ over any alphabet~$\Delta$, there
are non-empty words $v,x,w,y,z\in\Delta^+$ and a language
$L'\in\{L,\overline{L}\}$, where 
$\overline{L}=\Delta^*\smallsetminus L$ is the complement of $L$,
such that $0^m1^n\in L_\#$ if{f} $vx^mwy^{n-1}z\notin L'$ and $vx^mwy^nz\in L'$. 

Therefore, the simple many-one (in fact, one-one) 
reduction from $L_\#$ with one query to the oracle $L$ is replaced
by a truth-table reduction, that is, by 
a special Turing
reduction in which all its finitely many (in our case two) oracle
queries are presented at the same time and there is a Boolean function
(a truth table) which, when given the answers to the queries, produces the final answer of the reduction. This truth-table reduction from $L_\#$ to $L$ can be implemented by a  deterministic finite-state transducer 
(a Mealy machine) $\mathcal{A}$ with the oracle $L$:
It transforms the input $0^m1^n$ where $m,n\geq 1$ (the inputs outside $0^+1^+$ are rejected), to the output $vx^mwy^{n-1}\in\Delta^+$
and carries out two queries to $L$ that arise by concatenation of this
output with two fixed suffixes $z$ and $yz$; hence the queries are
$vx^mwy^{n-1}z\stackrel{?}{\in}L$ and $vx^mwy^nz\stackrel{?}{\in}L$.
The truth table is defined so that the input $0^m1^n$ is accepted by
$\mathcal{A}^L$ if{f} the two answers to these queries are distinct
and at same time, the first answer is negative in the case $L'=L$,
and positive in the case $L'=\overline{L}$, which is equivalent to
$0^m1^n\in L_\#$.

It follows that the {\nrdcfl} language $L_\#$ is {\nrdcfl}-simple
 under the truth-table reduction by Mealy machines. Since this
reduction can be implemented by 1ANNs, we achieve the desired stronger
separation
$\mbox{\nrdcfl}\subseteq(\mbox{2ANN}\,\setminus\,\mbox{1ANN})$
in the analog neuron hierarchy~\cite{Sima21}. This result constitutes
a non-trivial application of the proposed concept of \nrdcfl-simple
problem. Moreover, if we could generalize the result to
(nondeterministic) context-free languages (CFL), e.g. by proving that
some {\nrdcfl} language is \nrcfl-simple (where
{\nrcfl}\,$=$\,CFL$\smallsetminus$\,REG), which would imply that
$L_\#$ is \nrcfl-simple by the transitivity of reduction, then we would
achieve even stronger separation
$\mbox{\nrcfl}\subseteq(\mbox{2ANN}\,\setminus\,\mbox{1ANN})$. 
We note the interesting fact that $L_\#$ cannot be CSL$'$-simple (under
our reduction), since
1ANN accepts some context-sensitive
languages outside CFL~\cite{Sima20}.

In general, if we show that some $\mathcal{C}$-simple problem under a
given reduction cannot be computed by a computational model $\calM$
that implements this reduction, then all 
problems in
the class $\mathcal{C}$ are not solvable by $\calM$ either. The notion
of $\mathcal{C}$-simple problems can thus be useful for expanding known (e.g.\ technical) lower-bound results for individual problems to the whole classes of problems
at once, as it was the case of the \nrdcfl-simple problem
$L_\#\notin\,\mbox{1ANN}$, expanding to $\mbox{\nrdcfl}\cap\,\mbox{1ANN}\,=\emptyset$. 
It seems worthwhile to explore if looking for
$\mathcal{C}$-simple problems in other complexity classes
$\mathcal{C}$ could provide effective tools for strengthening known
lower bounds.

We remark that the hardest
context-free language by Greibach~\cite{DBLP:journals/siamcomp/Greibach73} can be viewed as CFL-hard 
under a special type of our reduction $\mpred$.
Related line of study concerns the types of reductions used in finite or pushdown automata with oracle. For example, nondeterministic finite automata with oracle complying with many-one restriction have been applied to establishing oracle hierarchies over the context-free languages~\cite{Reinhardt90}. For the same purpose, oracle pushdown automata have been used for many-one, truth-table, and Turing reducibilities, respectively, inducing the underlying definitions also to oracle nondeterministic finite automata~\cite{Yamakami14}. In addition, nondeterministic finite automata whose oracle queries are completed by the prefix of an input word that has been read so far and the remaining suffix, have been employed in defining a polynomial-size oracle hierarchy~\cite{AnabtawiHKZ19}.

In the preliminary study~\cite{SimaP19}, some considerations about the
simplest {\nrdcfl} language have appeared, yet
without formal definitions of \nrdcfl-simple problems, that included
only
sketches of incomplete proofs of weaker results based on the
representation of DCFL by so-called deterministic monotonic restarting
automata~\cite{JancarMPV99}, which have initiated investigations of
non-regularity degrees in DCFL~\cite{MrazPPS20}.

In this paper we achieve a complete argument for $L_\#$ to be 
a \nrdcfl-simple problem, within the framework of deterministic pushdown
automata (DPDA) by using some ideas on regularity of pushdown
processes from~\cite{Jancar20}. We now give an informal overview of the
proof. Given a DPDA $\calM$ recognizing a non-regular language
$L\subseteq \Delta^*$, it
is easy to realize that some computations of $\calM$ (from the initial
configuration) must be reaching configurations where the stack
is arbitrarily large while it can be (almost) erased afterwards.
Hence the existence of words $v,x,w,y,z\in\Delta^+$ such that
$vx^mwy^mz\in L$ for all $m\geq 0$ is obvious. However, we aim 
to guarantee that for all $m,n$ the equality $m=n$ holds if, and only
if,
$vx^mwy^{n-1}z\notin L'$ and 
 $vx^mwy^{n}z\in L'$, where $L'$ is either the language $L$ or its
 complement.
 This is not so straightforward but it is confirmed by our 
 detailed analysis (in  \cref{mproof}). We study the computation of
 $\calM$ on an infinite word $a_1a_2a_3\cdots$ that visits infinitely
 many pairwise non-equivalent configurations. 
We use a natural congruence property of language equivalence on
the set of configurations, and avoid
some tedious technical
 details by a particular use of Ramsey's theorem. This allows us to
 extract the required tuple  $v,x,w,y,z\in\Delta^+$ from the mentioned
 infinite computation. We note that determinism of $\calM$ is
essential in the presented proof; we leave open if it can be
 relaxed to show that  $L_\#$ is even CFL$'$-simple.

The rest of the paper is organized as follows. 
In \cref{oMM} we recall basic definitions and notation regarding DPDA
and Mealy machines, introduce the novel concept of \nrdcfl-simple
problems under truth-table reduction by Mealy machines and show some
simple properties of the class DCFLS of \nrdcfl-simple problems. In
\cref{mproof} we present the proof of the main technical result which
shows that $L_\#$ is \nrdcfl-simple. Finally, we summarize the results and list some open problems in \cref{concl}.

\section{\nrdcfl-Simple 
Problem Under Truth-Table Mealy Reduction}
\label{oMM}

In this section we define the truth-table reduction by Mealy
machines, introduce the notion of \nrdcfl-simple 
problems, show their
basic properties, and formulate the main technical result (\cref{maint}).
But first we recall standard definitions of pushdown automata.

A \emph{pushdown automaton (PDA)} is a tuple
$\calM=(Q,\Sigma,\Gamma,R,q_0,X_0,F)$ where $Q$ is a finite set of
states including the start state $q_0\in Q$ and the set $F\subseteq Q$
of accepting states, while the finite sets $\Sigma\not=\emptyset$ and
$\Gamma\not=\emptyset$ represent the input and stack alphabets,
respectively, with the initial stack symbol $X_0\in\Gamma$. In
addition, the set $R$~contains finitely many transition rules
$pX\gt{a}q\gamma$ with the meaning that $\calM$ in state $p\in Q$, on
the input $a\in\Sigma_\varepsilon=\Sigma\cup\{\varepsilon\}$
(recall $\varepsilon$ denotes the empty string),
and with $X\in\Gamma$ as the topmost stack symbol may read $a$, change the state to $q\in Q$, and pop $X$, \mbox{replacing it by pushing $\gamma\in\Gamma^*$.}

By a \emph{configuration} of $\calM$ we mean $p\alpha\in
Q\times\Gamma^*$, 
and we define relations $\gt{a}$ for $a\in\Sigma_\varepsilon$ on 
$Q\times\Gamma^*$:
each rule $pX\gt{a}q\gamma$ in $R$ induces
$pX\alpha\gt{a}q\gamma\alpha$ for all $\alpha\in\Gamma^*$; these
relations are naturally extended to $\gt{w}$ for $w\in\Sigma^*$.
For a~configuration $p\alpha$ we define 
$\calL(p\alpha)=\{w\in\Sigma^*\mid p\alpha\gt{w}q\beta\mbox{ for some
}q\in F\mbox{ and }\beta\in\Gamma^*\}$,
and $\calL(\calM)=\calL(q_0X_0)$ is the language accepted by $\calM$.
A PDA $\calM$ is \emph{deterministic} (a DPDA) if there is at most one
rule $pX\gt{a}..$ for each tuple $p\in Q$, $X\in\Gamma$,
$a\in\Sigma_{\varepsilon}$; moreover, if there is a rule
 $pX\gt{\varepsilon}..$, then there is no rule $pX\gt{a}..$ for
 $a\in\Sigma$. 
We also use the standard assumption that all $\varepsilon$-steps are
popping, that is, in each rule $pX\gt{\varepsilon}q\gamma$ in $R$ we have $\gamma=\varepsilon$.

The languages accepted by (deterministic) pushdown automata constitute
the class of \emph{(deterministic) context-free languages}; the
classes are denoted by DCFL and CFL, respectively, whereas 
\nrdcfl\,$=$\,DCFL\,$\smallsetminus$\,REG. 

In the following theorem
we formulate the main technical result: any language in
{\nrdcfl} includes a certain ``projection'' of the language
$L_\#=\{0^n1^n\mid n\geq 1\}$, which means that $L_\#$ is in some
sense the simplest language in the class \nrdcfl. 
The theorem, whose proof will be presented in \cref{mproof}, 
thus provides an interesting property 
of \nrdcfl.
\begin{theorem}
\label{maint}
Let $L\subseteq\Delta^*$ be a non-regular deterministic context-free
	language over an alphabet~$\Delta$.
	There exist non-empty words $v,x,w,y,z\in\Delta^+$ and a language $L'\in\{L,\overline{L}\}$ 
	(where  $\overline{L}=\Delta^*\smallsetminus L$ is the complement of $L$)
 such that for all $m\geq 0$ and $n>0$ we have
\begin{equation}
\label{cond}
\left(vx^mwy^{n-1}z\notin L' \mbox{ and }\, vx^mwy^nz\in L'\right)\quad\mbox{if{f}}\quad m=n\,.
\end{equation}
\end{theorem}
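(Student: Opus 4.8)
The plan is to extract the tuple $(v,x,w,y,z)$ from a single infinite computation of a (completed) DPDA $\calM$ recognising $L$, using the non-regularity of $L$ to force an unbounded, self-embedding stack behaviour, and then to read off a \emph{push} word $x$, a matching \emph{pop} word $y$, and a suffix $z$ that detects the balance point $m=n$. First I would make $\calM$ complete (every input is read to the end, possibly into a sink) and recall that $L$ being non-regular means $\Sigma^*$ has infinitely many pairwise distinct left quotients $u^{-1}L=\calL(c_u)$, where $c_u$ is the configuration reached by $u$. The reachable part of the (infinite) quotient automaton of $L$ is finitely branching with all states reachable from the start, so a shortest-path spanning tree is infinite and finitely branching; by K\"onig's lemma it has an infinite branch, which spells an infinite input word $\mathbf a=a_1a_2\cdots$ whose prefixes reach pairwise non-equivalent configurations $c_0\gt{a_1}c_1\gt{a_2}c_2\cdots$.

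Since the configurations $c_i$ are pairwise distinct and only finitely many configurations have any fixed bounded stack height, the height tends to infinity along $\mathbf a$; selecting the positions at which the current stack bottom is thereafter never popped yields configurations $d_0,d_1,\dots$ with strictly increasing stacks $\alpha^{(j+1)}=\beta_j\alpha^{(j)}$, where the word read from $d_j$ to $d_{j+1}$ never touches the symbol at height $|\alpha^{(j)}|$. Next I would homogenise this tower. By the pigeonhole principle infinitely many $d_j$ share one frozen top symbol $A=\mathrm{top}(\alpha^{(j)})$, and colouring each pair $j<l$ of the surviving indices by the ordered pair of their states and applying the infinite Ramsey theorem forces a common state $p$ (a three-element monochromatic set makes the two colour coordinates coincide). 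Because the connecting computations never inspect anything below $A$, the word $x$ read between two consecutive surviving configurations is then \emph{iterable}: from the base $B=p\alpha^{(0)}$ one obtains $B\gt{x^m}p\beta^m\alpha^{(0)}$ with $\mathrm{top}(\beta)=A$ and the bottom $\alpha^{(0)}$ untouched throughout, and $q_0X_0\gt{v}B$ for the run prefix $v$. Crucially, $B=p\alpha^{(0)}$ and $p\beta\alpha^{(0)}$ both lie on the original run, hence are non-equivalent, so some word separates them.

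The heart of the argument, and the step I expect to be the genuine obstacle, is the descending phase: I must produce $w$ and $y$ so that reading $y$ removes exactly one copy of $\beta$ and returns to a fixed state, giving $q_0X_0\gt{vx^mwy^n}$ a configuration with $m-n$ copies of $\beta$ remaining (for $n\le m$) that equals the base precisely when $n=m$, together with a suffix $z$ whose acceptance flips exactly at the base. Here I would again invoke the non-equivalence of $p\alpha^{(0)}$ and $p\beta\alpha^{(0)}$ together with the congruence of language equivalence under reading input: the separator of these two configurations has to be \emph{localised} so that $z$ distinguishes ``base'' from ``at least one leftover $\beta$'' without depending on how many copies of $\beta$ remain, and a second application of Ramsey would be used to make the popping computations uniform, so that the $0\to 1$ transition of $n\mapsto[vx^mwy^nz\in L']$ occurs only at $n=m$. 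The polarity is then fixed by choosing $L'\in\{L,\overline L\}$.

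Establishing that this crossing is both \emph{present} (guaranteed by the non-equivalence that survives from the infinite run) and \emph{unique} --- ruling out spurious flips for $n<m$, for $n>m$ where $y$ pops into $\alpha^{(0)}$, and for the boundary case $m=0$ --- is the delicate technical core, and is exactly where determinism is used; by contrast, the mere existence of some $v,x,w,y,z$ with $vx^mwy^mz\in L$ is the routine self-embedding fact already noted in the introduction. I therefore expect the bulk of the detailed analysis to be spent not on producing the pumpable structure, but on guaranteeing a single clean threshold in the pop phase.
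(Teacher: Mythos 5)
You have reproduced the paper's ascent phase faithfully: the completion convention, the K\"onig's-lemma word with pairwise distinct left quotients, the stair factorization along positions whose stack bottom is never popped, and pigeonhole/Ramsey to get an iterable pair $pX\gt{x}pX\gamma$ are exactly the paper's opening moves (with one caveat: the configurations along the run need \emph{not} be pairwise distinct or non-equivalent, since an $\varepsilon$-popping segment may pass through an accepting state; the paper only shows each configuration is visited at most twice, \cref{cl:atmosttwice}, so a particular pair of levels with matching $(p,X)$ could be equivalent and you must choose among infinitely many candidates). The genuine gap is that the descent phase --- which you yourself identify as the heart --- is left as a declaration of intent (``the separator has to be localised'', ``a second application of Ramsey would be used'') rather than an argument. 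Concretely, three load-bearing devices are missing. First, you need a state $q$ with $pX\gt{w}q$ and a \emph{nonempty} $y$ with $q\gamma\gt{y}q$, such that a separating state $q'$ with $q'\gamma\delta\not\sim q'\delta$ lies in $\ds(q\gamma)$; the paper obtains this from a Ramsey coloring whose colors record not merely states but the down-state sets $\ds(\cdot)$, the per-state data $(\ds(q\nu),\es(q\nu))$, and the separator sets, yielding conditions 2--4 of \cref{cl:Ramsey} (equal down-sets, $\varepsilon$-stabilization $\es(q'\gamma)=\{q'\}$, and every state reachable from a \emph{self-containing} one). Your coloring by state pairs alone cannot deliver any of this. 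Second, the very existence of a down-state separator with a nonempty witness is secured in the paper by the congruence property (\cref{cl:congruence}) combined with a \emph{maximal-shortening} argument on the stair computation and the coarse relation $\sim_0$, which excludes that $q'\gamma\delta$ and $q'\delta$ differ only on $\varepsilon$; you invoke ``congruence'' by name but never formulate it, and without the $\sim_0$ bookkeeping your $z$ could be forced empty, while the theorem requires $z\in\Delta^+$.

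Third, even granted a separator, nothing in your sketch rules out oscillation: the predicate $y^\ell z\in\calL(q\gamma^k\delta)$ could a priori flip infinitely often in $k$ or in $\ell$, so the ``single clean threshold'' you postulate may simply not exist for the original $y$. The paper handles the $k$-direction by the case analysis in \cref{cl:weakerversion} (whether reading $z$ from $q\gamma^d\delta$ ever triggers an $\varepsilon$-pop through $\gamma$, where condition 3 of \cref{cl:Ramsey} makes the $\varepsilon$-popping stabilize on a fixed state), giving eventual constancy for $\ell>|z|+1$ and hence a one-sided threshold after re-basing $\delta:=\gamma^{\ell_0}\delta$; it handles the $n>m$ direction ($y$ popping into $\delta$) by proving eventual \emph{periodicity} of $\ell\mapsto[\,y^\ell z\in\calL(q\delta)\,]$ (\cref{cl:strongversion}) and then killing the period by replacing $x,y,\gamma$ with powers $x^{k_0},y^{k_0},\gamma^{k_0}$ for a suitable multiple $k_0$ of the period. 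Your proposal never mentions periodicity or this power-alignment step, yet without it membership could, say, alternate with period $2$ in $n$ and the biconditional (\ref{cond}) would fail. In short: the pumping skeleton is the right roadmap and matches the paper, but the crossing-uniqueness you flag as the crux is asserted, not proved, and the paper's actual mechanisms for it (congruence plus maximal shortening, the enriched Ramsey coloring, and periodicity plus powers) have no counterpart in your proposal.
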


In order to formalize the \nrdcfl-simple problems, we now define a
\emph{Mealy machine} $\mathcal{A}$ \emph{with an oracle}:  
it is a tuple
$\mathcal{A}=(Q,\Sigma,\Delta,\delta,\lambda,q_0,\{(\sigma_q,f_q)\mid
q\in Q\})$ where $Q$ is a finite set of states including the start
state $q_0\in Q$, and the finite sets $\Sigma\not=\emptyset$ and
$\Delta\not=\emptyset$ represent the input and output (oracle)
alphabets, respectively. Moreover, $\delta:Q\times\Sigma\rightarrow Q$
is a (partial) state-transition function which extends to input
strings as $\delta:Q\times\Sigma^*\rightarrow Q$ where
$\delta(q,\varepsilon)=q$ for every $q\in Q$, while
$\delta(q,wa)=\delta(\delta(q,w),a)$ for all $q\in Q$, $w\in\Sigma^*$, $a\in\Sigma$. Similarly, $\lambda:Q\times\Sigma\rightarrow\Delta^*$ is an output function which extends to input strings as $\lambda:Q\times\Sigma^*\rightarrow\Delta^*$ where $\lambda(q,\varepsilon)=\varepsilon$ 
for all $q\in Q$, and
$\lambda(q,wa)=\lambda(q,w)\cdot\lambda(\delta(q,w),a)$ for all $q\in Q$, $w\in\Sigma^*$, $a\in\Sigma$.
In addition, for each $q\in Q$, the tuple
$\sigma_q=(s_{q1},\ldots,s_{qr_q})$ of strings in $\Delta^*$ contains $r_q$~query suffixes, while $f_q:\{0,1\}^{r_q}\rightarrow\{0,1\}$ is a truth table that aggregates the answers to the $r_q$~oracle queries.

The above Mealy machine $\mathcal{A}$ starts in the start state $q_0$ and
operates as a deterministic finite-state transducer that transforms an
input word $w\in\Sigma^*$ to the output string
$\mathcal{A}(w)=\lambda(q_0,w)\in\Delta^*$ written to a so-called
oracle tape. The oracle tape is a semi-infinite, write-only tape which
is empty at the beginning and its contents are only extended in the
course of computation by appending the strings to the right. Namely,
given a current state $q\in Q$ and an input symbol $a\in\Sigma$, the
machine $\mathcal{A}$ moves to the next state $\delta(q,a)\in Q$ and
writes the string $\lambda(q,a)\in\Delta^*$ to the oracle tape, if
$\delta(q,a)$ is defined; otherwise $\mathcal{A}$ rejects the input.
After reading the whole input word $w\in\Sigma^*$, the
machine~$\mathcal{A}$ is in the state $p=\delta(q_0,w)\in Q$, while the oracle tape contains the output $\mathcal{A}(w)=\lambda(q_0,w)\in\Delta^*$. 

Finally, the Mealy machine $\mathcal{A}$, equipped with an
oracle $L\subseteq \Delta^*$, in this case denoted $\mathcal{A}^L$, 
queries the oracle whether $\mathcal{A}(w)$ belongs to the (right)
quotient $L/s_{pi}=\{u\in\Delta^*\mid u\cdot s_{pi}\in L\}$, for each
suffix $s_{pi}$ in $\sigma_p$, and the answers are aggregated by the
truth table $f_p$. Thus, the oracle Mealy machine $\mathcal{A}^L$ accepts the input
word $w\in\Sigma^*$ if{f}
\[
f_p\left(\chi_{L/s_{p1}}(\mathcal{A}(w)),\chi_{L/s_{p2}}(\mathcal{A}(w)),\ldots,\chi_{L/s_{pr_p}}(\mathcal{A}(w))\right)=1
\]
where $p=\delta(q_0,w)$ and
$\chi_{L/s_{pi}}:\Delta^*\rightarrow\{0,1\}$ is the characteristic function of $L/s_{pi}$, 
that is, $\chi_{L/s_{pi}}(u)=1$ if $u\cdot s_{pi}\in L$, and
$\chi_{L/s_{pi}}(u)=0$ if $u\cdot s_{pi}\notin L$.
The language accepted by the machine $\mathcal{A}^L$ 
is defined as 
$\calL(\mathcal{A}^L)=\{w\in\Sigma^*\mid w
$ is accepted by $\mathcal{A}^L\}$.\footnote{Note that the described protocol works also for non-prefix-free languages since for any input prefix that has been read so far, the output value from the truth table determines whether the oracle Mealy machine is in an ``accepting'' state, deciding about this prefix analogously as a deterministic finite automaton. The truth-table reduction only requires that the given oracle answers do not influence further computation when subsequent input symbols are read.}

We say that 
$L_1\subseteq\Sigma^*$ is \emph{truth-table reducible} to $L_2\subseteq\Delta^*$ 
\emph{by a Mealy machine}, which is denoted  as $L_1\mpred L_2$,
if $L_1=\calL(\mathcal{A}^{L_2})$ for some Mealy machine~$\mathcal{A}$
running with the oracle $L_2$. The following lemma shows that we can
chain these reductions together since the relation $\mpred$ is a preorder.
\begin{lemma}
\label{trans}
The relation $\mpred$ is reflexive and transitive. 
\end{lemma}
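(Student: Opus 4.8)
The plan is to prove the two properties separately, since reflexivity is essentially trivial and transitivity is where all the real content lies. For reflexivity, I would exhibit, for any $L\subseteq\Sigma^*$, a Mealy machine $\mathcal{A}$ with oracle $L$ realizing $L\mpred L$. The natural choice is the ``identity transducer'': a single-state machine (or one tracking enough to copy input to output) with $\lambda$ set so that $\mathcal{A}(w)=w$, equipped in each state $q$ with a single query suffix $\sigma_q=(\varepsilon)$ and the identity truth table $f_q(b)=b$. Then $\mathcal{A}^L$ accepts $w$ iff $\chi_{L/\varepsilon}(w)=1$, i.e.\ iff $w\cdot\varepsilon=w\in L$, so $\calL(\mathcal{A}^L)=L$. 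The only mild care needed is that $\Sigma=\Delta$ here so the output alphabet matches, and that the single state suffices to make $\lambda(q_0,w)=w$.

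The substance is transitivity: given $L_1\mpred L_2$ via a Mealy machine $\mathcal{A}=(Q,\Sigma,\Delta,\delta,\lambda,q_0,\{(\sigma_q,f_q)\})$ with oracle $L_2$, and $L_2\mpred L_3$ via $\mathcal{B}=(P,\Delta,\Xi,\delta',\lambda',p_0,\{(\tau_p,g_p)\})$ with oracle $L_3$, I must build a single Mealy machine $\mathcal{C}$ with oracle $L_3$ such that $\calL(\mathcal{C}^{L_3})=L_1$. The idea is to compose the transducers and then recombine the query structure. First I would run $\mathcal{A}$ and $\mathcal{B}$ in series on the transducer level, defining $\mathcal{C}(w)=\mathcal{B}(\mathcal{A}(w))$; since both $\lambda$ and $\lambda'$ are length-monotone prefix-preserving outputs, the standard product/cascade construction on state sets $Q\times P$ (feeding $\mathcal{A}$'s output symbols one at a time into $\mathcal{B}$) yields a Mealy transducer computing this composite, with state-transition and output functions defined in the obvious cascaded way. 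The key point is that after $\mathcal{C}$ reads $w$, reaching composite state $(q,p)$ where $q=\delta(q_0,w)$ and $p=\delta'(p_0,\mathcal{A}(w))$, the oracle tape holds $\mathcal{B}(\mathcal{A}(w))$.

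The genuinely delicate step is reconstructing the truth-table layer, and this is where I expect the main obstacle to lie. In state $q$, machine $\mathcal{A}^{L_2}$ needs, for each query suffix $s_{qi}$, the bit $\chi_{L_2/s_{qi}}(\mathcal{A}(w))$, i.e.\ whether $\mathcal{A}(w)\cdot s_{qi}\in L_2$. But whether a string belongs to $L_2$ is itself decided by $\mathcal{B}^{L_3}$, and here is the subtlety: $\mathcal{B}^{L_3}$ decides membership of $\mathcal{A}(w)\cdot s_{qi}$ by running $\mathcal{B}$ on that \emph{whole} string, whereas $\mathcal{C}$ has only written $\mathcal{B}(\mathcal{A}(w))$ to its tape. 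I would handle this by noting that each suffix $s_{qi}$ is a fixed constant string, so the state $\delta'(p,s_{qi})=:p_i$ that $\mathcal{B}$ reaches after processing $s_{qi}$ from $p$, and the fixed additional output $\lambda'(p,s_{qi})$ that $\mathcal{B}$ would append, are both finitely determined and can be precomputed into $\mathcal{C}$'s description. Consequently $\chi_{L_2/s_{qi}}(\mathcal{A}(w))$ equals $g_{p_i}$ applied to the bits $\chi_{L_3/(\lambda'(p,s_{qi})\cdot t)}(\mathcal{B}(\mathcal{A}(w)))$ ranging over $t$ in $\tau_{p_i}$. Thus $\mathcal{C}$'s query suffixes in state $(q,p)$ are the finitely many strings $\lambda'(p,s_{qi})\cdot t$ over all $i$ and all $t\in\tau_{p_i}$, all of which are fixed constants, and $\mathcal{C}$'s truth table is the composition that first feeds these bits through the appropriate $g_{p_i}$'s to recover each $\chi_{L_2/s_{qi}}(\mathcal{A}(w))$ and then applies $f_q$. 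Verifying that this composed Boolean function reproduces exactly the acceptance condition of $\mathcal{A}^{L_2}$, together with the bookkeeping that $\mathcal{B}(\mathcal{A}(w)\cdot s_{qi})=\mathcal{B}(\mathcal{A}(w))\cdot\lambda'(p,s_{qi})$, is the crux; once it is in place, $\calL(\mathcal{C}^{L_3})=\calL(\mathcal{A}^{L_2})=L_1$ follows and transitivity is established.
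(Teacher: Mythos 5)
Your proposal is correct and matches the paper's proof essentially step for step: the same identity transducer with suffix $(\varepsilon)$ for reflexivity, and for transitivity the same cascade on the product state set with composite query suffixes $\lambda'(p,s_{qi})\cdot t$ (the paper's $\lambda_2(p_2,s_{p_1,i})\cdot s_{p_2(i),j}$ with $p_2(i)=\delta_2(p_2,s_{p_1,i})$) and the composed truth table feeding the inner machines' tables into $f_q$ (the paper's $f_p=g_{p_1}(h_{p_2(1)},\ldots,h_{p_2(r_{p_1})})$). The bookkeeping identity $\mathcal{B}(\mathcal{A}(w)\cdot s_{qi})=\mathcal{B}(\mathcal{A}(w))\cdot\lambda'(p,s_{qi})$ that you flag as the crux is exactly what underlies the paper's construction as well.
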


\begin{proof}
The relation $\mpred$ is reflexive since
	$L=\calL(\mathcal{A}^L)\subseteq\Sigma^*$ for the oracle Mealy
	machine $\mathcal{A}^L=(\{q\},\Sigma,\Sigma,\delta,\lambda,q,\{(\sigma_q,f_q)\})$ where $\delta(q,a)=q$ and $\lambda(q,a)=a$ for every $a\in\Sigma$, $\sigma_q=(\varepsilon)$, and $f_q$ is the identity.

Now we show that the relation $\mpred$ is transitive. 
	Let $L_1\mpred L_2$ and $L_2\mpred L_3$ which means $L_1=\calL(\mathcal{A}_1^{L_2})\subseteq\Sigma^*$ and $L_2=\calL(\mathcal{A}_2^{L_3})\subseteq\Delta^*$ for some oracle Mealy machines $\mathcal{A}_1^{L_2}=(Q_1,\Sigma,\Delta,\delta_1,\lambda_1,q_0^1,\{(\pi_q,g_q)\mid q\in Q_1\})$ and $\mathcal{A}_2^{L_3}=(Q_2,\Delta,\Theta,\delta_2,\lambda_2,q_0^2,\{(\varrho_q,h_q)\mid q\in Q_2\})$, respectively. We will construct the oracle Mealy machine~$\mathcal{A}^{L_3}=(Q,\Sigma,\Theta,\delta,\lambda,q_0,\{(\sigma_q,f_q)\mid q\in Q\})$ such that $L_1=\calL(\mathcal{A}^{L_3})\subseteq\Sigma^*$ which implies the transitivity $L_1\leq^\mathcal{A} L_3$.
We define $Q=Q_1\times Q_2$ with $q_0=(q_0^1,q_0^2)$, $\delta((q_1,q_2),a)=(\delta_1(q_1,a),\delta_2(q_2,\lambda_1(q_1,a)))$
and $\lambda((q_1,q_2),a)=\lambda_2(q_2,\lambda_1(q_1,a))$ for every $(q_1,q_2)\in Q$ and $a\in\Sigma$, which ensures $\mathcal{A}(w)=\lambda(q_0,w)=\lambda_2(q_0^2,\lambda_1(q_0^1,w))=\mathcal{A}_2(\mathcal{A}_1(w))\in\Theta^*$ for every $w\in\Sigma^*$. For each state $p=(p_1,p_2)\in Q$ in $\mathcal{A}$, we define the tuple of query suffixes from $\Theta^*$,
\[
\sigma_p=\left(\lambda_2(p_2,s_{p_1,i})\cdot s_{p_2(i),j}\,\big|\,i=1,\ldots,r_{p_1}\,,\,j=1,\ldots,r_{p_2(i)}\right)
\]
where $\pi_{p_1}=(s_{p_1,1},s_{p_1,2}\ldots,s_{p_1,r_{p_1}})\in\Delta^{r_{p_1}}$ and $\varrho_{p_2(i)}=(s_{p_2(i),1},s_{p_2(i),2}\ldots,s_{p_2(i),r_{p_2(i)}})\in\Theta^{r_{p_2(i)}}$ are the query suffixes associated with $p_1\in Q_1$ and $p_2(i)=\delta_2(p_2,s_{p_1,i})\in Q_2$ for $i\in\{1,\ldots,r_{p_1}\}$, respectively, and the truth table $f_p=g_{p_1}(h_{p_2(1)},\ldots,h_{p_2(r_{p_1})})$ aggregates the answers to the corresponding oracle queries, which ensures $L_1=\calL(\mathcal{A}^{L_3})\subseteq\Sigma^*$.
\end{proof}

We say that a (decision) problem $L_0\subseteq\Sigma^*$ is
\emph{\nrdcfl-simple} if $L_0\mpred L$ for every non-regular
deterministic context-free language $L\subseteq\Delta^*$. It follows
from \cref{maint} that the {\nrdcfl} language $L_\#$ is an
example of a \nrdcfl-simple problem. In addition, we denote by DCFLS the class of \nrdcfl-simple problems and formulate its basic properties.
\begin{corollary}[of \cref{maint}]
\label{dcflsimpl}
The non-regular deterministic context-free language $L_\#=\{0^n1^n\mid n\geq 1\}$ is \nrdcfl-simple.
\end{corollary}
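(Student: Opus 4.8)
The plan is to obtain the corollary as a direct consequence of \cref{maint}: for an arbitrary non-regular deterministic context-free language $L\subseteq\Delta^*$ I would construct an oracle Mealy machine $\mathcal{A}^L$ with $\calL(\mathcal{A}^L)=L_\#$, which witnesses $L_\#\mpred L$; since $L$ ranges over all of \nrdcfl, this is exactly the assertion that $L_\#$ is \nrdcfl-simple. So I would first fix the non-empty words $v,x,w,y,z\in\Delta^+$ and the language $L'\in\{L,\overline{L}\}$ supplied by \cref{maint}, for which $m=n$ holds if{f} $vx^mwy^{n-1}z\notin L'$ and $vx^mwy^nz\in L'$, for all $m\geq 0$ and $n>0$.

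Next I would build the transducer part of $\mathcal{A}$ on three states $q_0,q_1,q_2$ that together recognize the shape $0^+1^+$ and emit the intended output: from the start state $q_0$ the first $0$ outputs $vx$ and moves to $q_1$; in $q_1$ each further $0$ outputs $x$ and stays, while the first $1$ outputs $w$ and moves to $q_2$; in $q_2$ each further $1$ outputs $y$ and stays. The two remaining transitions (a $1$ in $q_0$ and a $0$ in $q_2$) are left undefined, so that every input outside $0^+1^+$ is rejected outright. A routine induction on the input length then gives $\mathcal{A}(0^m1^n)=vx^mwy^{n-1}$, with $\mathcal{A}$ ending in state $q_2$, for all $m,n\geq 1$.

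It then remains to set the query suffixes and truth tables. On $q_0$ and $q_1$ I would attach the empty tuple of suffixes together with the constant-$0$ truth table, so that the empty input and all inputs $0^m$ are rejected. On $q_2$ I would put $\sigma_{q_2}=(z,yz)$, so that reading $0^m1^n$ issues precisely the two queries $vx^mwy^{n-1}z\in L$ and $vx^mwy^nz\in L$, whose answers are the bits $b_1=\chi_{L/z}(\mathcal{A}(0^m1^n))$ and $b_2=\chi_{L/yz}(\mathcal{A}(0^m1^n))$. Finally I would define $f_{q_2}$ by a case split: if $L'=L$, let $f_{q_2}(b_1,b_2)=1$ exactly when $b_1=0$ and $b_2=1$; if $L'=\overline{L}$, exactly when $b_1=1$ and $b_2=0$. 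Invoking the equivalence of \cref{maint} in each case then yields that $0^m1^n$ is accepted if{f} $m=n$, hence $\calL(\mathcal{A}^L)=L_\#$.

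The construction is essentially mechanical, so no step poses a serious obstacle; all the genuine difficulty lies in \cref{maint} itself, proved in \cref{mproof}. The only point that demands attention is that the oracle bits $b_1,b_2$ always report membership in $L$, whereas \cref{maint} is phrased through $L'$: when $L'=\overline{L}$ the two complementations swap the roles of the answers, which is exactly what the second branch of the truth table $f_{q_2}$ records.
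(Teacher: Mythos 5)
Your proposal is correct and coincides with the paper's own proof in essentially every detail: the same three-state transducer emitting $vx$, $x$, $w$, $y$, the same suffix tuple $\sigma_{q_2}=(z,yz)$, constant-$0$ truth tables at $q_0,q_1$, and a truth table at $q_2$ that accepts exactly on answer pair $(0,1)$ when $L'=L$ and on $(1,0)$ when $L'=\overline{L}$, which is precisely the paper's condition $f_{q_2}(0,0)=f_{q_2}(1,1)=0$, $f_{q_2}(1,0)=1-f_{q_2}(0,1)$ with $f_{q_2}(0,1)=1$ if{f} $L'=L$. Your closing remark about the queries reporting membership in $L$ while \cref{maint} speaks of $L'$ is exactly the right point of care, and it is handled correctly.
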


\begin{proof}
Let $L\subseteq\Delta^*$ be any {\nrdcfl} language. According
	to \cref{maint}, there are $v,x,w,y,z\in\Delta^+$ and
	$L'\in\{L,\overline{L}\}$ such that condition~(\ref{cond})
	holds for $L'$.
We define the Mealy machine $\mathcal{A}^L=(\{q_0,q_1,q_2\},\{0,1\},\Delta,\delta,\lambda,q_0,\{(\sigma_q,f_q)\mid q\in Q\})$ with the oracle $L$, as $\delta(q_0,0)=\delta(q_1,0)=q_1$, $\delta(q_1,1)=\delta(q_2,1)=q_2$, $\lambda(q_0,0)=vx$, $\lambda(q_1,0)=x$, $\lambda(q_1,1)=w$, $\lambda(q_2,1)=y$,
$\sigma_{q_2}=(z,yz)$,
$f_{q_0}=f_{q_1}=0$, $f_{q_2}(0,0)=f_{q_2}(1,1)=0$, 
and $f_{q_2}(1,0)=1-f_{q_2}(0,1)$ where $f_{q_2}(0,1)=1$ if{f} $L'=L$.
	It is easy to verify that $L_\#=\calL(\mathcal{A}^L)$, which
	implies $L_\#\mpred L$. Hence, $L_\#$ is \nrdcfl-simple.
\end{proof}

\begin{proposition}~
\label{dcfls}
\begin{enumerate}
\item $\mbox{REG}\,\subsetneq\,\mbox{DCFLS}$.
\item $\mbox{DCFLS}\,\subsetneq\,\mbox{DCFL}$, and $L_R=\{wcw^R\mid
	w\in\{a,b\}^*\}\in\mbox{DCFL}\smallsetminus\mbox{DCFLS}$.
\item \label{compl} The class DCFLS is closed under complement and intersection with regular languages.
\item The class DCFLS is not closed under concatenation, intersection and union.
\end{enumerate}
\end{proposition}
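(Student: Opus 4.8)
The plan rests on a single reformulation: combining \cref{dcflsimpl} with the transitivity of $\mpred$ (\cref{trans}) shows that a language $L_0$ is \nrdcfl-simple if{f} $L_0\mpred L_\#$. Indeed, if $L_0\in\mbox{DCFLS}$ then $L_0\mpred L_\#$ because $L_\#$ is itself in \nrdcfl, while conversely $L_0\mpred L_\#$ gives $L_0\mpred L$ for every \nrdcfl language $L$ since $L_\#\mpred L$ by \cref{dcflsimpl}; hence $\mbox{DCFLS}=\{L_0\mid L_0\mpred L_\#\}$, and the whole proposition reduces to analysing this single lower cone. Parts~1 and~3 are then routine manipulations of the reduction machinery. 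For $\mbox{REG}\subseteq\mbox{DCFLS}$ I realise a DFA as a Mealy machine issuing no queries ($r_q=0$, so each $f_q$ is the constant equal to $1$ exactly at accepting states); such a machine ignores its oracle and thus witnesses $L_0\mpred L$ for every $L$, and strictness follows from $L_\#\in\mbox{DCFLS}\setminus\mbox{REG}$. For closure under complement I first make the transducer total (routing undefined transitions to an absorbing state carrying the constant-$0$ table) and then replace each $f_q$ by $1-f_q$; for intersection with a regular language $R$ I take the product of the transducer with a DFA for $R$ and conjoin ``the DFA accepts'' into every truth table. Both constructions preserve $\mpred L_\#$.

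The inclusion $\mbox{DCFLS}\subseteq\mbox{DCFL}$ amounts to DCFL being closed downwards under $\mpred$: if $L_0\mpred L$ and $L\in\mbox{DCFL}$ then $L_0\in\mbox{DCFL}$, which applied to $L=L_\#$ yields the claim. I build a DPDA for $L_0$ that simulates the transducer $\mathcal{A}$ (so it knows the final state $p$) while feeding the fixed output blocks $\lambda(q,a)$ to a DPDA for $L$; the only delicate point is that, at the end of the input, all queries $\mathcal{A}(w)\cdot s_{pi}\in L$ have to be answered from the single configuration reached. I resolve this using that ``feeding the fixed word $s_{pi}$ leads to acceptance'' is a regular predicate on configurations (the standard regularity of $\mathrm{pre}^*$ for pushdown systems), whose value can be maintained incrementally by running a finite automaton bottom-up over the stack and storing its states next to the stack symbols; the tables $f_p$ are then evaluated in the finite control. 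This is standard but is the most laborious step of the easy direction.

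The crux is the negative direction $L_R\not\mpred L_\#$ (that $L_R\in\mbox{DCFL}$ is classical). The governing fact is that a Mealy reduction to $L_\#$ is very weak: the answer to a query $\mathcal{A}(w)\cdot s\in L_\#$ is the conjunction of an eventually periodic condition on the shape of the finite-transducer output with a single affine equality between its numbers of $0$'s and $1$'s, so acceptance of $w$ is a fixed Boolean combination of finitely many such tests. I would make this precise and deduce that among the $2^n$ inputs $w=uc$ with $u\in\{a,b\}^n$ the induced behaviour on the continuation after $c$ takes only polynomially many values --- one transducer state, one bounded shape class, and one count difference lying in an interval of width $O(n)$. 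Recognising $L_R$, however, forces these $2^n$ behaviours to be pairwise distinct, since otherwise two words $u\neq u'$ with equal behaviour would make $ucu^R$ and $u'cu^R$ accepted or rejected together although exactly one of them lies in $L_R$; for large $n$ this is a contradiction. Turning the informal ``behaviour'' into an honest finite invariant of the oracle machine is the delicate part, and I expect it to be the main obstacle of the proposition.

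Part~4 reuses this count-space picture. For intersection, $A=\{a^nb^nc^m\}$ and $B=\{a^mb^nc^n\}$ each reduce to $L_\#$ --- the transducer checks the regular shape $a^*b^*c^*$ and emits $0^{\#a}1^{\#b}$, respectively $0^{\#b}1^{\#c}$, for one equality query --- hence $A,B\in\mbox{DCFLS}$, whereas $A\cap B=\{a^nb^nc^n\}\notin\mbox{CFL}\supseteq\mbox{DCFLS}$. Non-closure under union then follows formally: DCFLS is closed under complement by Part~3, so union-closure would force intersection-closure by De~Morgan, contradicting the previous example. For concatenation I take the DCFLS languages $0^*$ and $L_\#$ and note $0^*\cdot L_\#=\{0^a1^b\mid a\geq b\geq 1\}$; by the same structural analysis, any language $M$ with $M\mpred L_\#$ has, in the $(a,b)$ count plane, an accept-set that is a Boolean combination of affine hyperplanes and periodic sets, and the half-plane $a\geq b$ is not of this form, so $0^*\cdot L_\#\notin\mbox{DCFLS}$. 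I expect this half-plane argument to reuse exactly the structural lemma developed for $L_R$.
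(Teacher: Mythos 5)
Your proposal is correct, and its skeleton coincides with the paper's on most points: both arguments rest on the observation that DCFLS is exactly the lower cone $\{L_0 \mid L_0 \mpred L_\#\}$ (via \cref{trans} and \cref{dcflsimpl}); both prove $\mbox{REG}\subseteq\mbox{DCFLS}$ by a machine with constant truth tables and get strictness from $L_\#$; both obtain closure under complement and intersection with regular languages by negating tables and a product construction (your explicit totalization step before negating is in fact a small gap the paper's one-line complement argument glosses over, since a partial $\delta$ rejects by being undefined); both refute $L_R\mpred L_\#$ by a pigeonhole over the $2^k$ prefixes in $\{a,b\}^k$ against polynomially many classes, where the paper uses the pair (state, exact oracle-tape content in $0^*1^*$, with outputs outside $0^*1^*$ collapsed) and you use the coarser but equally valid invariant (state, shape class, count difference); and both derive non-closure under union by De Morgan from a non-context-free intersection of two DCFLS witnesses (your $\{a^nb^nc^m\}$, $\{a^mb^nc^n\}$ are the paper's $\{0^m1^m0^n\}$, $\{0^m1^n0^n\}$ in disguise). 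There are two genuine divergences. First, for $\mbox{DCFLS}\subseteq\mbox{DCFL}$ the paper exploits the specific counter-like DPDA for $L_\#$, evaluating each truth table by inspecting a constant number of topmost stack symbols; you instead prove the stronger fact that DCFL is downward closed under $\mpred$ for an \emph{arbitrary} DCFL oracle, via regularity of $\mathrm{pre}^*$ and stack annotation --- heavier machinery, but it buys a reusable closure statement rather than an $L_\#$-specific one. Second, for concatenation the paper merely asserts (it is a proof sketch) that $(L_\#)^2$ is not \nrdcfl-simple, whereas you choose $0^*\cdot L_\#$ and actually sketch the impossibility: since all queries share the same output prefix, their count conditions are translates of a single affine function, so per residue class the accepted set in the $(a,b)$-plane is determined by membership of one affine difference in a finite or cofinite set, and such ``parallel hyperplane'' combinations cannot carve out the half-plane $a\geq b$. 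That structural lemma is true and easier than you fear --- each quotient $L_\#/s$ is exactly ``shape condition plus one equality on $\#0-\#1$'', so the behaviour invariant (shape class among $\varepsilon$, $0^+$, $0^+1^+$, junk, plus the difference) determines membership of every extension by direct inspection --- and, pleasingly, the same lemma (with non-parallel hyperplanes allowed across residue classes) would also supply the justification the paper omits for its own $(L_\#)^2$ claim.
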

\begin{proof}[Sketch.]~\\
\textbf{1.}
For any regular language $L$, consider a Mealy machine $\mathcal{A}^{L_\#}$ with the \nrdcfl-simple oracle $L_\#$, that simulates a deterministic finite automaton recognizing $L$, while its constant truth tables produce~1 if{f} associated with the accept states. Hence, $L\mpred L_\#$ which means $L$ is \nrdcfl-simple according to \cref{trans} and \cref{dcflsimpl} which also implies $\mbox{REG}\,\not=\,\mbox{DCFLS}$.

\noindent
\textbf{2.}
We first observe that $\mbox{DCFLS}\,\subseteq\,\mbox{DCFL}$. Let $L\in$\,DCFLS be any \nrdcfl-simple language which ensures $L\mpred L_\#$ by an oracle Mealy machine $\mathcal{A}^{L_\#}$. The machine $\mathcal{A}^{L_\#}$ can be simulated by a DPDA $\calM$ which extends a suitable DPDA $\calM_\#$ (e.g.\ with no $\varepsilon$-transitions) accepting $L_\#=\calL(\calM_\#)$, so that the finite control of $\calM$ implements the finite-state transducer $\mathcal{A}$ whose output is presented online as an input to $\calM_\#$. Moreover, for each state $q$ of $\mathcal{A}$, the finite control of $\calM$ evaluates the truth table $f_q$ which aggregates the answers to the queries with $r_q$ suffixes associated with $q$, by inspecting at most constant number of topmost stack symbols. Hence $L=\calL(\calM)\in$\,DCFL.

In order to show that $\mbox{DCFLS}\,\not=\,\mbox{DCFL}$, we prove that the DCFL $L_R=\{wcw^R\mid w\in\{a,b\}^*\}$ over the alphabet $\{a,b,c\}^*$ is not \nrdcfl-simple. 
For the sake of contradiction, suppose that $L_R\mpred L_\#$ by a Mealy machine $\mathcal{A}^{L_\#}=(Q,\{a,b,c\}^*,\{0,1\}^*,\delta,\lambda,q_0,\{(\sigma_q,f_q)\mid q\in Q\})$ with the oracle $L_\#=\{0^n1^n\mid n\geq 1\}$, which means $L_R=\calL(\mathcal{A}^{L_\#})$. Consider all the $2^k$ possible prefixes $w\in\{a,b\}^k$ of inputs presented to $\mathcal{A}^{L_\#}$ that have the length $|w|=k$. These strings can bring $\mathcal{A}^{L_\#}$ into a finite number $|\{\delta(q_0,w)\mid w\in\{a,b\}^k\}|\leq|Q|$ of distinct states while the length $|\lambda(q_0,w)|$ of outputs written to the oracle tape is bounded by $O(k)$. For $\lambda(q_0,w)$ outside $0^*1^*$,  
the acceptance of words $wu$ where $u\in\{a,b,c\}^*$, depends only on the truth values $f_q(0,\ldots,0)$ associated with the states $q$ from the finite set~$Q$, due to $\lambda(q_0,wu)\notin L_\#/s$ for any $s\in\{0,1\}^*$. On the other hand, the number of distinct outputs $\lambda(q_0,w)$ in $0^*1^*$
is bounded by $O(k)$. This means that for a sufficiently large $k\geq 1$, there must be two distinct prefixes $w_1,w_2\in\{a,b\}^k$ such that 
$\delta(q_0,w_1)=\delta(q_0,w_2)$ and $\lambda(q_0,w_1)=\lambda(q_0,w_2)$ in $0^*1^*$,  
which results in the contradiction $w_1cw_2^R\in\calL(\mathcal{A}^{L_\#})\smallsetminus L_R$.

\noindent
\textbf{3.}
The class DCFLS is closed under complement since the truth tables can be negated. Furthermore, any oracle Mealy machine be can modified so that it simulates another given finite automaton in parallel and is forced to reject if this automaton rejects, which shows DCFLS to be closed under intersection with regular languages. 

\noindent
\textbf{4.}
	Observe that $(L_\#)^2$ is not \nrdcfl-simple under truth-table reduction. In addition, $L_1=\{0^m1^m0^n\mid m,n\geq 1\}$ and $L_2=\{0^m1^n0^n\mid m,n\geq 1\}$ are \nrdcfl-simple while $L_1\cap L_2$ is not context-free. The proof for union follows from \ref{compl} and De Morgan's law.
\end{proof}

\section{Proof of the Main Result (Theorem~\ref{maint})}
\label{mproof}

Theorem~\ref{maint} follows from the (more specific) next lemma 
that we prove in this section. 

By $\Nat$ we denote the set
$\{0,1,2,\dots\}$, and by $[i,j]$ the set $\{i,i{+}1,\dots,j\}$ (for
$i,j\in\Nat$).

\begin{lemma}\label{lem:nonregwitness}
Let $\calM=(Q,\Sigma,\Gamma,R,p_0,X_0,F)$ be a DPDA where
	$L=\calL(p_0X_0)$ is non-regular (hence $L$ belongs to
	\nrdcfl).
There are 
$v\in\Sigma^*$, $x,w,y,z\in\Sigma^+$, $p,q\in Q$,
$X\in\Gamma$,
$\gamma\in\Gamma^+$, $\delta\in\Gamma^*$ such that the following four
	conditions hold:
\begin{enumerate}
\item
$p_0X_0\gt{v}pX\delta$ and $pX\gt{x} pX\gamma$,
\\
which entails the infinite (stack increasing) computation
\begin{equation}\label{eq:witness}
p_0X_0\gt{v}pX\delta\gt{x}pX\gamma\delta
\gt{x}pX\gamma\gamma\delta\gt{x}pX\gamma\gamma\gamma\delta\gt{x}\cdots;
\end{equation}
\item
	$pX\gt{w}q$;

\item $q\gamma\gt{y}q$, 

hence	
	$q\gamma^\ell\delta'\gt{y^\ell}q\delta'$ for
		all $\ell\in\Nat$ and
		$\delta'\in\Gamma^*$;
	\item
		one of the following cases is valid (depending on
		whether
		$z\in\calL(q\delta)$ or $z\not\in\calL(q\delta)$):
		\begin{enumerate}
			\item
$\calL(q\gamma^k\delta)\ni y^\ell z$ iff $k=\ell$	
				(for all $k,\ell\in\Nat$),
				or 
$\calL(q\gamma^k\delta)\ni y^\ell z$ iff $k\leq\ell$	
				(for all $k,\ell\in\Nat$);

	\item
$\calL(q\gamma^k\delta)\ni y^\ell z$ iff $k\neq\ell$	
				(for all $k,\ell\in\Nat$),
				or 
$\calL(q\gamma^k\delta)\ni y^\ell z$ iff $k>\ell$	
				(for all $k,\ell\in\Nat$).
		\end{enumerate}
		\end{enumerate}		
\end{lemma}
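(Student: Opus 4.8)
The plan is to read the DPDA $\calM$ from its initial configuration and to locate, inside a sufficiently rich computation, a pumpable ``surface pair'' whose iterates behave exactly like the stack of copies that $L_\#$ needs. Throughout I write $c_u$ for the configuration reached after reading $u$ (with $\varepsilon$-steps taken), and I call two configurations equivalent when their languages $\calL(\cdot)$ coincide. The basic dictionary is that $\calL(c_u)=u^{-1}L=\{t\mid ut\in L\}$, so the equivalence classes of reachable configurations are in bijection with the residuals of $L$; since $L$ is non-regular there are infinitely many of them, and in particular reachable configurations of unbounded stack height exist (a height bound would leave finitely many configurations, hence a regular $L$). The device driving the whole argument, as indicated in the overview, is an infinite input word $a_1a_2a_3\cdots$ whose computation visits infinitely many pairwise non-equivalent configurations; such a word must have unbounded stack height (each height level carries only finitely many classes), and it is here that determinism is essential, since it is what lets one follow a single forced spine rather than a branching tree.

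For the first condition I would take any prefix of this computation reaching a configuration of height larger than $|Q|\cdot|\Gamma|$ and apply pigeonhole to the ``permanent'' surface configurations along it, i.e. those whose topmost symbol is never popped afterwards: two of them share a surface $(p,X)$, which yields $v$ with $p_0X_0\gt{v}pX\delta$ and $x\in\Sigma^+$ with $pX\gt{x}pX\gamma$ (so $\gamma\in\Gamma^+$), and iterating $x$ produces the stack-increasing computation~\eqref{eq:witness} together with the reachable family $c_j=pX\gamma^j\delta$. The real content is to choose the pump so that the $c_j$ are \emph{pairwise non-equivalent}; the mere pigeonhole above does not guarantee this, and I would secure it by combining the staircase with the non-equivalence supplied by the infinite word, organised through Ramsey's theorem. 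Fixing a surface $(p,X)$ that occurs at infinitely many permanent indices and $2$-colouring pairs of those indices by whether the corresponding configurations are equivalent, Ramsey yields an infinite homogeneous index set; the homogeneous-equivalent alternative would, via the congruence below, collapse this part of the computation to finitely many residuals and contradict that the word keeps meeting fresh classes, so one lands on an infinite pairwise-non-equivalent subfamily, which after reparametrising $\gamma$ by the common gap gives a single pump with non-equivalent iterates $pX\gamma^j\delta$.

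The second and third conditions I would obtain by turning this non-equivalence into forced popping. If $pX\gamma^j\delta$ and $pX\gamma^{j'}\delta$ are distinguished by some word, then running $\calM$ on that word from both configurations, determinism forces identical steps until the top copy of $X$ is popped and the computation dips into the region where the two stacks differ; hence $X$ is poppable, giving a computation $pX\gt{}r$ into the $\gamma$-region. Tracking the control states at the successive $\gamma$-period boundaries as the $\gamma$'s are uncovered one by one, a pigeonhole (or another application of Ramsey) produces a state $q$ recurring at two boundaries, which is exactly $pX\gt{w}q$ and $q\gamma\gt{y}q$ with $w,y\in\Sigma^+$; the stated consequence $q\gamma^\ell\delta'\gt{y^\ell}q\delta'$ is then immediate by iteration, and one checks that the descent can be arranged so that the $q\gamma^j\delta$ remain pairwise non-equivalent.

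The fourth condition reduces to a single $0/1$ sequence. Using $q\gamma^k\delta\gt{y^\ell}q\gamma^{k-\ell}\delta$ for $\ell\le k$, the membership $y^\ell z\in\calL(q\gamma^k\delta)$ equals $B(k-\ell)$, where $B(j)=1$ iff $z\in\calL(q\gamma^j\delta)$, while for $\ell\ge k$ it equals $C(\ell-k)$, where $C(m)=1$ iff $y^mz\in\calL(q\delta)$ and $C(0)=B(0)$. A short case analysis shows that the four listed cases are precisely the patterns obtained when $B(j)=1-B(0)$ is constant for all $j\ge1$ (the flip at $j=0$ separates case~(a), occurring when $z\in\calL(q\delta)$, from case~(b)) and $C(m)$ is constant for $m\ge1$ (its value selecting $k=\ell$ versus $k\le\ell$, respectively $k\neq\ell$ versus $k>\ell$). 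So the whole task collapses to choosing $z\in\Sigma^+$ that separates $q\delta$ from every $q\gamma^j\delta$ $(j\ge1)$ in one uniform direction while distinguishing no two of the $q\gamma^j\delta$ $(j\ge1)$ from each other, with $y^mz$ eventually giving a fixed answer from $q\delta$. This uniform $z$ is the main obstacle, and I expect to get it from the congruence property that full equivalence of tails (equivalence for all entry states) is preserved under prefixing: passing to an infinite index set on which the shortest distinguishing experiments of the non-equivalent $q\gamma^j\delta$ are homogeneous — a $2$-colouring of pairs by the bounded data (state and top symbol) at which such an experiment first separates the two stacks, to which Ramsey's theorem applies — one reads off a single $z$ realising the required flip at the bottom and constancy above it, with the overflow values $C(m)$ stabilising by the same homogeneity. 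Extracting a legitimate $z$, $y$, $\delta$ from this homogeneous set is where the bulk of the technical work and the use of determinism sit; the final verification that the resulting $v,x,w,y,z$ satisfy all four conditions and translate back to~\eqref{cond} in \cref{maint} is then routine.
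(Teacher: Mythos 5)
Your skeleton (an infinite word meeting pairwise distinct left quotients, a stair-factorization, a repeated surface pair $(p,X)$, descent states at the $\gamma$-boundaries, and the reduction of condition~4 to ``$B(j)$ flips once at the bottom, $C(m)$ is constant above'') is the same as the paper's, but both places where you delegate the work to Ramsey homogeneity contain genuine gaps. The first is in your second paragraph: your $2$-colouring is applied to pairs of level configurations that actually occur in the computation on the infinite word, and those are automatically pairwise non-equivalent (at most three level-configurations can share a $\sim$-class), so the colouring yields nothing new --- and, crucially, it says nothing about the \emph{iterates} $pX\gamma^j\delta$, which for $j\geq 2$ are synthetic configurations never visited by that computation. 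There is also no ``common gap'' to reparametrise by: the stack segments between successive occurrences of the surface $(p,X)$ are in general \emph{distinct} words, so they are not powers of one $\gamma$, and pairwise non-equivalence of the iterates of a single pump simply does not follow (and need not hold for an arbitrarily chosen repeat). The paper does not even aim at this property. Instead it proves a congruence property (\cref{cl:congruence}), shows that if every down-state $q'$ satisfied $q'\gamma\delta\sim q'\delta$ the staircase prefix could be shortened (\cref{cl:shorten}), fixes \emph{maximally shortened} prefixes of unbounded height (\cref{cl:shortenedarelong}), and then applies Ramsey's theorem to a much richer colouring --- recording the surface pairs, the sets $\ds(\cdot)$, the $\varepsilon$-reachability data $\es(\cdot)$, and the sets of down-states $q'$ with $q'\nu\rho\not\sim q'\rho$ resp.\ $\sim_0$ --- so that maximality forces a down-state $q'$ with $q'\gamma\delta\not\sim q'\delta$ yet $q'\gamma\delta\sim_0 q'\delta$ (\cref{cl:Ramsey}). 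The $\sim_0$-agreement is what guarantees a \emph{nonempty} distinguishing word, and hence $y\neq\varepsilon$; your sketch never excludes the possibility that the descent through the $\gamma$-blocks is pure $\varepsilon$-popping, in which case your boundary-pigeonhole returns $y=\varepsilon$ and conditions 3--4 collapse.

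The second gap is exactly the one you name yourself: the ``uniform $z$''. Your target pattern for $B$ and $C$ is correct, but the paper does not obtain it from homogeneity of shortest distinguishing experiments. Constancy of $z$'s membership in $\calL(q\gamma^\ell\delta)$ for large $\ell$ comes from the $\varepsilon$-stability clause of the Ramsey colour (if $\es(q\gamma)=\{q'\}$ then $\es(q'\gamma)=\{q'\}$): either every boundary crossing during the run of $z$ consumes an input letter, so at most $|z|$ blocks are entered and $\delta$ is never reached, or one $\varepsilon$-crossing short-circuits \emph{all} remaining blocks down to $q''\delta$; the single flip is then pushed to the bottom by redefining $\delta:=\gamma^{\ell_0}\delta$ and $v:=vx^{\ell_0}$ (\cref{cl:weakerversion}). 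And constancy of $C(m)$ for $m\geq 1$ is not a homogeneity phenomenon at all but eventual \emph{periodicity} of the run on $y^\omega$ from $q\delta$ (\cref{cl:strongversion}), repaired to genuine constancy by replacing $x,y,\gamma$ with their $k_0$-th powers for a suitable multiple $k_0$ of the period exceeding the preperiod. None of these mechanisms (maximal shortening, the $\es$-stability case split, the bottom shift, the final powering) appears in your proposal, and your closing sentence concedes that extracting legitimate $z,y,\delta$ is left open --- but that extraction \emph{is} the lemma; what you have established independently is only the easy staircase/pumping skeleton.
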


\medskip

We note that
$p_0X_0\gt{v}pX\delta\gt{x^m}pX\gamma^m\delta\gt{w}q\gamma^m\delta\gt{y^m}q\delta$
(for each $m\in\Nat$);
hence $vx^mwy^mz\in L$ iff $z\in\calL(q\delta)$ (since $z$ is nonempty).
Theorem~\ref{maint} indeed follows from the lemma: 
there is $L'\in\{L,\overline{L}\}$ such that 
	either $vx^mwy^nz\in L'$ iff $m=n$ (for all $m,n\in\Nat$),
 or $vx^mwy^nz\in L'$ iff
	$m\leq n$ (for all $m,n\in\Nat$).
	(In \cref{maint} we also stated that $v$ is nonempty. If 
 $v=\varepsilon$ here, then we simply take $vx$ and $yz$ as the new $v,z$,
 respectively.)

\paragraph*{Proof of Lemma~\ref{lem:nonregwitness}}
In the rest of this section we provide a proof of  
\cref{lem:nonregwitness}, assuming a fixed
DPDA $\calM=(Q,\Sigma,\Gamma,R,p_0,X_0,F)$ where 
$L=\calL(p_0X_0)$ is non-regular. 
The proof structure is visible from 
the auxiliary claims that we state and prove on the way.

\medskip

\emph{Convention.} W.l.o.g. we assume that $\calM$ always 
reads the whole input $w\in\Sigma^*$ from $p_0X_0$.
This can be accomplished in the standard way, by adding 
a special bottom-of-stack symbol $\bot$ and a (non-accepting)
fail-state.
(Each empty-stack configuration $q\varepsilon$ becomes $q\bot$, and
each originally stuck computation enters the fail-state where it
loops. We also recall that all $\varepsilon$-steps are 
popping, and thus infinite $\varepsilon$-sequences are impossible.)
Hence for
any infinite word $a_1a_2a_3\cdots$ in $\Sigma^\omega$ there is
the unique infinite computation of $\calM$ starting in $p_0X_0$; it  
stepwise reads the whole infinite word $a_1a_2a_3\cdots$.

\medskip

The \emph{left quotient of} $L$ \emph{by} $u\in\Sigma^*$ is the set 
	$u\backslash L=\{v\in\Sigma^*\mid uv\in L\}$; 
	concatenation 
	has priority over $\backslash$, hence  $u_1u_2\backslash
	L=(u_1u_2)\backslash L$. (The next claim is valid for any
	non-regular $L$.)

\begin{claim}\label{cl:infseq}
We can fix an infinite word $a_1a_2a_3\cdots$ in
	$\Sigma^\omega$ ($a_i\in\Sigma$) such that $a_1a_2\cdots a_i\backslash L\neq
	a_1a_2\cdots a_j\backslash L$ for all $i\neq j$.
\end{claim}
\begin{proof}
Let us consider the labelled transition system	
	$\calT=(\LQ(L),\Sigma,(\gt{a})_{a\in\Sigma})$ 
	where $\LQ(L)=\{u\backslash L\mid u\in\Sigma^*\}$
	and $\mathop{\gt{a}} = \{(L',a\backslash L')\mid L'\in\LQ(L)\}$. 
	(We recall that $L'=u\backslash L$ entails $a\backslash L'=
	ua\backslash L$.)
	Since $L$ is non-regular, the set of states 
	reachable from $L=\varepsilon\backslash L$ in $\calT$ is
	infinite. The out-degree of states in $\calT$ is finite
	(in fact, bounded by $|\Sigma|$), hence an application of
	K\"onig's lemma yields
	an infinite \emph{acyclic} path
	$L\gt{a_1}L_1\gt{a_2}L_2\gt{a_3}\cdots$.
\end{proof}

We call
a~\emph{configuration} $p\alpha$ of $\calM$ \emph{unstable} if 
$\alpha=Y\beta$ and $R$ contains a rule
$pY\gt{\varepsilon}q$ (we recall that $\varepsilon$-steps are only
popping); otherwise $p\alpha$ is \emph{stable}. 
Since $\calM$ is a \emph{deterministic} PDA,
for each unstable $p\alpha$ we can soundly define the \emph{stable
successor of} $p\alpha$ as 
the unique stable configuration $p'\alpha'$ where 
$p\alpha\gt{\varepsilon}p'\alpha'$ ($\alpha'$ being a suffix of
$\alpha$). The path $p\alpha\gt{\varepsilon}p'\alpha'$ might (not) go
via an accepting state (in $F$), hence
$\calL(p\alpha)=\calL(p'\alpha')$ or 
$\calL(p\alpha)=\{\varepsilon\}\cup\calL(p'\alpha')$.
(We note that the
configurations in the computation~(\ref{eq:witness}) that start with
$pX$ are necessarily stable.)

\begin{claim}\label{cl:atmosttwice}	
Each configuration is visited at most twice by 
	\begin{equation}\label{eq:infcomput}
	\textnormal{the 
	computation of $\calM$ from $p_0X_0$
		on $a_1a_2a_3\cdots$ that is fixed by \cref{cl:infseq}.} 
	\end{equation}
\end{claim}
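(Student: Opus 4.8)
The plan is to exploit the tight correspondence between the languages $\calL(\cdot)$ of the visited configurations and the left quotients $u\backslash L$, together with the fact (from \cref{cl:infseq}) that the prefixes of the fixed word $a_1a_2a_3\cdots$ induce pairwise distinct quotients. The bound of \emph{two} (rather than one) will arise because a left quotient is pinned down by its intersection with $\Sigma^+$ only up to whether it contains the empty word $\varepsilon$: the configuration reached after reading a given prefix determines precisely this $\Sigma^+$-part, but not the $\varepsilon$-bit.

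First I would record two auxiliary facts. \textbf{(a)} \emph{Distinct visits consume distinct amounts of input.} If the fixed computation~\eqref{eq:infcomput} were in one and the same configuration at two different moments having read the same prefix, then only $\varepsilon$-steps would occur in between; but by the convention that all $\varepsilon$-steps are popping, every such step strictly decreases the stack height, so no configuration can recur along a maximal $\varepsilon$-run. Hence any two visits of a configuration $C$ happen after reading prefixes $a_1\cdots a_i$ and $a_1\cdots a_j$ with $i\neq j$. \textbf{(b)} \emph{Each visit realizes the current quotient on nonempty words.} Suppose $p_0X_0\gt{a_1\cdots a_i}C$ where $C$ lies on the popping $\varepsilon$-chain that follows the reading of $a_i$. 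Then for every nonempty $w\in\Sigma^+$ we have $w\in\calL(C)$ if and only if $a_1\cdots a_i\,w\in L$, because the computation on $a_1\cdots a_i\,w$ is unique and, since $w$ is nonempty, it must pass through $C$ before consuming any symbol of $w$. Consequently $\calL(C)\cap\Sigma^+=(a_1\cdots a_i\backslash L)\cap\Sigma^+$.

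With these in hand the argument is short. Suppose, for contradiction, that some configuration $C$ is visited at least three times, after reading prefixes of lengths $i_1<i_2<i_3$ (distinct by~\textbf{(a)}). By~\textbf{(b)}, the three quotients $a_1\cdots a_{i_1}\backslash L$, $a_1\cdots a_{i_2}\backslash L$, $a_1\cdots a_{i_3}\backslash L$ all share the common $\Sigma^+$-part $\calL(C)\cap\Sigma^+$, so they can differ only in whether they contain $\varepsilon$. As there are just two possibilities for this single bit, two of the three quotients must coincide, contradicting \cref{cl:infseq}, which guarantees that all quotients $a_1\cdots a_i\backslash L$ are pairwise distinct. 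Hence $C$ is visited at most twice.

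Finally I would verify fact~\textbf{(b)} carefully, in particular its propagation along the chain: for a popping $\varepsilon$-step $D\gt{\varepsilon}D'$ one has $\calL(D)\cap\Sigma^+=\calL(D')\cap\Sigma^+$ (the two languages may differ only by $\varepsilon$, according to whether $D\in F$), which carries the identity from the configuration reached immediately after reading $a_i$ down to every configuration on the chain, and also covers the boundary case $i=0$, $C=p_0X_0$. I expect the main subtlety to be precisely this bookkeeping of the empty word: it is what forces the bound to be two rather than one, and it is where the popping convention for $\varepsilon$-steps is genuinely used.
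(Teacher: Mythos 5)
Your proposal is correct and takes essentially the same route as the paper: both arguments rest on the observation that any configuration visited after reading a prefix $a_1\cdots a_i$ determines the quotient $a_1\cdots a_i\backslash L$ up to the empty word (since an $\varepsilon$-chain may pass through an accepting state), so a third visit would force two of the pairwise distinct quotients guaranteed by \cref{cl:infseq} to coincide. The only difference is presentational: the paper splits configurations into stable and unstable ones (handling the latter via their unique stable successors), whereas your fact \textbf{(b)} treats every configuration on a popping $\varepsilon$-chain uniformly, which slightly streamlines the case analysis.
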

\begin{proof}
	The computation~(\ref{eq:infcomput})  is infinite, 
	stepwise reading the whole word $a_1a_2a_3\cdots$, 
and it can be presented as
	\begin{center}	
$r_0\gamma_0\gt{a_1}r_1\gamma_1\gt{a_2}r_2\gamma_2\gt{a_3}\cdots$
(for $r_0\gamma_0=p_0X_0$)
	\end{center}
		where each $r_i\gamma_i$ is stable; each
	segment
$r_i\gamma_i\gt{a_{i+1}}r_{i+1}\gamma_{i+1}$ starts with a (visible)
$a_{i+1}$-step that is followed by a (maybe empty) sequence of
(popping) $\varepsilon$-steps via unstable configurations. Since such
an $\varepsilon$-sequence might go through an accepting state,
we can have $r_i\gamma_i=r_j\gamma_j$ for $i\neq j$ though 
$a_1a_2\cdots a_i\backslash L\neq
	a_1a_2\cdots a_j\backslash L$; in this case $L$ contains precisely one of
	the words $a_1a_2\cdots a_i$ and $a_1a_2\cdots a_j$, and the
	languages
	$a_1a_2\cdots a_i\backslash L$ and
$a_1a_2\cdots a_j\backslash L$ differ just on $\varepsilon$.
Nevertheless, this reasoning entails that we cannot have 
 $r_i\gamma_i=r_j\gamma_j=r_\ell\gamma_\ell$ for pairwise different
	$i,j,\ell$.

Since each segment $r_i\gamma_i\gt{a_{i+1}}r_{i+1}\gamma_{i+1}$
	visits any unstable configuration at most once and 
$r_{i+1}\gamma_{i+1}$ is the stable successor for all unstable
	configurations in the segment, we deduce that also each unstable
	configuration can be visited at most twice
	in the computation~(\ref{eq:infcomput}).
\end{proof}

\begin{claim}\label{cl:stairfactor}
	The computation~(\ref{eq:infcomput}) on
	$a_1a_2a_3\cdots$
	 can be ``stair-factorized'',
	that is, written 
	\begin{equation}\label{eq:stairseq}
p_0X_0\gt{v_0}p_1X_1\alpha_1\gt{v_1}p_2X_2\alpha_2\alpha_1
\gt{v_2}p_3X_3\alpha_3\alpha_2\alpha_1\gt{v_3}\cdots
	\end{equation}
	so that for each $i\in\Nat$ we have 	$v_i\in\Sigma^+$ 
and
	$p_iX_i\gt{v_i}p_{i+1}X_{i+1}\alpha_{i+1}$ 
where $\alpha_{i+1}$ is a nonempty suffix of the right-hand
		side of a rule in $R$ (i.e., a nonempty suffix of $\gamma$
		in a rule $pX\gt{a}q\gamma$).
\end{claim}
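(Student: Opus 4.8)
The plan is to first show that the stack grows beyond every bound along the computation~(\ref{eq:infcomput}), and then to read off the stair points of~(\ref{eq:stairseq}) from the moments at which a fresh, nonempty bottom chunk of the stack becomes permanent.

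For $i\in\Nat$ let $h_i$ be the length of the stack in the $i$-th (stable) configuration of~(\ref{eq:infcomput}). First I would prove that $h_i\to\infty$. Suppose not: then $h_i\le H$ for some fixed $H$ and for infinitely many~$i$. Since there are only finitely many configurations $p\alpha$ with $|\alpha|\le H$, the pigeonhole principle would force one of them to occur at least three times in~(\ref{eq:infcomput}), contradicting \cref{cl:atmosttwice}. Hence for every bound $H$ only finitely many $h_i$ satisfy $h_i\le H$, which is exactly $h_i\to\infty$. This is the one place where non-regularity enters, funnelled through \cref{cl:infseq} (which makes the computation infinite with pairwise different quotients) and \cref{cl:atmosttwice}.

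Next I would track the permanent part of the stack. For each $i$ set $m_i=\min_{j\ge i}h_j$; this is well defined and, by $h_i\to\infty$, non-decreasing with $m_i\to\infty$. Because $\varepsilon$-steps only pop, the height can decrease only in unit steps, so a symbol at a fixed level is never touched again once the height stays strictly above that level; consequently the bottom $m_i-1$ symbols of the $i$-th configuration are frozen for the rest of the computation, and this permanent block grows without bound. I would take the stair points $i_0<i_1<i_2<\cdots$ to be exactly the steps at which $m_i$ is about to strictly increase, i.e.\ the future-minimum configurations immediately preceding a permanent-height increase. At such a configuration the part strictly below the top is the frozen block $\alpha_{i_j}\cdots\alpha_1$, while the top symbol is consumed by the next step, which must be a push; and a push carries an input letter (since $\varepsilon$-steps are popping), so this already gives $v_i\in\Sigma^+$.

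The remaining, and main, task is to pin down the shape of $\alpha_{i+1}$. The growth-inducing push uses a rule $pX\gt{a}q\gamma$, and I would argue that the symbols of $\gamma$ that land on the newly permanent levels form a contiguous bottom segment of $\gamma$, hence a nonempty suffix of the right-hand side $\gamma$; everything above them is processed away before the next stair point is reached, so the factorization $p_iX_i\gt{v_i}p_{i+1}X_{i+1}\alpha_{i+1}$ of~(\ref{eq:stairseq}) holds with the block below $X_i$ left inert throughout $v_i$. The hard part will be exactly this bookkeeping: formally matching stack levels to positions inside the pushed right-hand side across the (possibly long) stretch of moves between two consecutive stair points, and checking that the lower block is never revisited. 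Once $h_i\to\infty$ and the notion of the permanent prefix are in place, the rest is a careful but routine induction on~$j$.
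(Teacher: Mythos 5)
Your proposal is correct and takes essentially the same route as the paper: the paper's ``levels'' (stable configurations all of whose successors have strictly longer stacks) are exactly your future-minimum stair points, obtained likewise from \cref{cl:atmosttwice} by a pigeonhole argument on bounded stack heights. The ``hard part'' you defer is in fact immediate from your own frozen-block observation and needs no induction: between consecutive stair points the height never drops below $m_{i+1}$, and the single push at a stair point already raises the stack to height at least $m_{i+1}$, so the newly permanent levels $h_i,\dots,m_{i+1}{-}1$ are filled by a bottom segment (i.e.\ a nonempty suffix $\gamma_2$) of that one right-hand side $\gamma=\gamma_1\gamma_2$ and are never exposed again, whence each segment factors independently as $q\gamma_1\gt{v'}p_{i+1}X_{i+1}$ with $\gamma_2$ and the older stack inert.
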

\begin{proof}
We consider the
	computation~(\ref{eq:infcomput}), and 
	call a stable configuration $pX\beta$ a \emph{level}, with
	\emph{position} $i\in\Nat$,
	if $p_0X_0\gt{a_1\cdots a_{i}}pX\beta$ and all configurations visited by the computation
	$pX\beta\gt{a_{i+1}a_{i+2}\cdots}$ after $pX\beta$ have the
	stack longer than $|X\beta|$; we note that each level $pX\beta$
	has a unique position $\textsc{pos}(pX\beta)$. 
Since  
each configuration is visited at most twice in~(\ref{eq:infcomput}),
	the set of levels is infinite, with elements 
$p'_0X'_0$, $p_1X_1\beta_1$, $p_2X_2\beta_2$, $\dots$
where
	$0\leq \textsc{pos}(p'_0X'_0)<\textsc{pos}(p_1X_1\beta_1)<\textsc{pos}(p_2X_2\beta_2)<\cdots$. 
	The computation~(\ref{eq:infcomput}) can thus be presented as 
	\begin{center}
		$p_0X_0\gt{v'_{0}}p'_0X'_0\gt{v''_{0}}p_1X_1\beta_1\gt{v_1}p_2X_2\beta_2
\gt{v_2}p_3X_3\beta_3\gt{v_3}\cdots$
	\end{center}
	where $|v'_0|=\textsc{pos}(p'_0X'_0)$,
	and $|v_0v_1\cdots v_{j-1}|=\textsc{pos}(p_jX_j\beta_j)$
	for $j\geq 1$, putting $v_0=v'_0v''_0$.

Each segment 
	$pX\beta\gt{v}p'X'\beta'$ between two neighbouring levels can
	be obviously written as
	$pX\beta\gt{a}q\gamma_1\gamma_2\beta\gt{v'}p'X'\gamma_2\beta$
	where $pX\gt{a}q\gamma_1\gamma_2$ is a rule in $R$, both 
	$\gamma_1$ and $\gamma_2$ are nonempty, $v=av'$, and 
	$q\gamma_1\gt{v'}p'X'$.
Hence the validity of the claim is clear.	
\end{proof}

We define the natural \emph{equivalence relation} $\sim$ \emph{on the set of
	configurations of $\calM$}: we put
$p\alpha\sim q\beta$ if $\calL(p\alpha)= \calL(q\beta)$.

We fix the presentation~(\ref{eq:stairseq}), 
	calling $p_iX_i\alpha_i\alpha_{i-1}\cdots\alpha_1$ the
	\emph{level-configurations} (for all $i\in\Nat$).
Since 
we have 
	$\calL(p_iX_i\alpha_i\alpha_{i-1}\cdots\alpha_1)\smallsetminus\{\varepsilon\}=
	(v_0v_1\cdots v_{i-1}\backslash L)\smallsetminus\{\varepsilon\}$,
	there cannot be three level-configurations in the same
	$\sim$-class (i.e., in the same equivalence class w.r.t. $\sim$).
	Hence any 
infinite set of level-configurations represents infinitely many
	$\sim$-classes.
Now we show a congruence-property that might enable to shorten
a level-configuration while keeping its $\sim$-class. 
We use the notation $\ds(p\alpha)$ (the ``down-states'' of
	$p\alpha$), 
	putting
\begin{center}	 
	$\ds(p\alpha)=\{q\mid p\alpha\gt{w} q$ for some
	$w\in\Sigma^*\}$.
\end{center}

\begin{claim}\label{cl:congruence}
If $q\gamma\sim q\gamma'$ for each $q\in\ds(p\beta)$, then 
$p\beta\gamma\sim p\beta\gamma'$.
\end{claim}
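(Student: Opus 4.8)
The plan is to prove the congruence property $p\beta\gamma\sim p\beta\gamma'$ by showing that the two configurations $p\beta\gamma$ and $p\beta\gamma'$ accept exactly the same language, i.e.\ $\calL(p\beta\gamma)=\calL(p\beta\gamma')$, under the hypothesis that $q\gamma\sim q\gamma'$ for every $q\in\ds(p\beta)$. The key observation is that any computation of $\calM$ starting from $p\beta\gamma$ reads some input while the stack still has $\beta$ (or a nonempty suffix of it) sitting on top of the trailing $\gamma$; during this initial phase the machine never touches $\gamma$, so its behaviour is literally identical to the corresponding phase starting from $p\beta\gamma'$. The only way the suffix $\gamma$ (respectively $\gamma'$) can ever influence acceptance is if the computation eventually pops all of $\beta$ and exposes $\gamma$; but at the moment $\beta$ is first fully erased, the machine is in some state $q$ with $q\in\ds(p\beta)$ by the very definition of $\ds(p\beta)$, and the remaining configuration is exactly $q\gamma$ (respectively $q\gamma'$).

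First I would set up the argument formally by fixing an arbitrary input word $u\in\Sigma^*$ and tracking the computation of $\calM$ from $p\beta\gamma$ on $u$. I would split into two cases according to whether, along this computation, the stack content ever becomes a (proper or improper) suffix of $\gamma$ — equivalently, whether $\beta$ gets fully consumed. In the first case, where $\beta$ is never fully erased while reading a prefix of $u$, the computations from $p\beta\gamma$ and $p\beta\gamma'$ proceed in lockstep (determinism plus the fact that transition rules only inspect the topmost stack symbol, which lies in $\beta$), so $u\in\calL(p\beta\gamma)$ iff $u\in\calL(p\beta\gamma')$ trivially, since neither reaches an accepting state via the buried suffix differently. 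In the second case, let $u=u_1u_2$ be the factorization where $u_1$ is the shortest prefix after which $\beta$ has just been fully popped; then $p\beta\gamma\gt{u_1}q\gamma$ and $p\beta\gamma'\gt{u_1}q\gamma'$ for the same state $q\in\ds(p\beta)$, and whether $u_1$ itself passed through an accepting state is identical in both runs. From there, $u_2\in\calL(q\gamma)$ iff $u_2\in\calL(q\gamma')$ by the hypothesis $q\gamma\sim q\gamma'$, and combining the two phases yields $u\in\calL(p\beta\gamma)$ iff $u\in\calL(p\beta\gamma')$.

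The main obstacle I anticipate is handling the $\varepsilon$-steps and the acceptance bookkeeping cleanly at the boundary where $\beta$ is exposed and then erased. Because acceptance is by final state and the popping $\varepsilon$-steps may pass through accepting states, I must be careful that the ``moment $\beta$ is fully popped'' is well-defined and that the empty word $\varepsilon$ is treated consistently on both sides; the convention (stated earlier) that all $\varepsilon$-steps are popping and that infinite $\varepsilon$-sequences are impossible guarantees that the stable successor is well-defined, so the split point exists. A second subtlety is that $\beta$ could be emptied during an $\varepsilon$-sequence rather than on a visible step, but since $\gamma$ (and $\gamma'$) are themselves nonempty the exposed configuration is genuinely $q\gamma$ (resp.\ $q\gamma'$) and the argument above applies; I would make sure that any acceptance occurring exactly at the split configuration is accounted for identically on both sides, which it is because the prefix $u_1$ and the state $q$ reached are common to both computations. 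Once these boundary cases are pinned down, the equivalence $\calL(p\beta\gamma)=\calL(p\beta\gamma')$, i.e.\ $p\beta\gamma\sim p\beta\gamma'$, follows.
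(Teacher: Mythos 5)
Your proof is correct and follows essentially the same route as the paper's: both decompose the run on a given word at the (unique, by determinism) moment the prefix $\beta$ is emptied, observe that the behaviour before that moment is identical for $p\beta\gamma$ and $p\beta\gamma'$, and apply the hypothesis $q\gamma\sim q\gamma'$ to the remaining suffix. The paper merely organizes the cases slightly differently (splitting on whether $w\in\calL(p\beta)$ and whether some prefix empties $\beta$), while you split directly on whether $\beta$ is ever fully popped and handle the acceptance bookkeeping at the boundary explicitly --- the same argument in substance.
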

\begin{proof}
Let us consider $w\in\Sigma^*$. If $w\in\calL(p\beta)$, then 
 $w\in\calL(p\beta\mu)$ for all $\mu\in\Gamma^*$. 
If $w\not\in\calL(p\beta)$ and 
	there is no prefix $v$ of $w$ such
	that $p\beta\gt{v}q$, 
	then $w\not\in\calL(p\beta\mu)$ for all $\mu\in\Gamma^*$. 
If $w\not\in\calL(p\beta)$ and $w=vv'$ where 
	$pX\beta\gt{v}q$ (necessarily for some $q\in\ds(pX\beta)$),
then  $w\in\calL(p\beta\mu)$ iff  $v'\in
	\calL(q\mu)$. Hence the claim is clear.
\end{proof}	

The next claim is an immediate corollary.

\begin{claim}\label{cl:shorten}
	Any computation
$p_0X_0\gt{w_1}pX\beta_1\gt{w_2}pX\beta_2\beta_1\gt{w_3}p'X'\beta_3\beta_2\beta_1$
	where $pX\gt{w_2}pX\beta_2$ ($w_2\in\Sigma^+$), $pX\gt{w_3}p'X'\beta_3$, and
	$q\beta_2\beta_1\sim q\beta_1$ for each
	$q\in\ds(p'X'\beta_3)$ can be shortened to 
$p_0X_0\gt{w_1}pX\beta_1\gt{w_3}p'X'\beta_3\beta_1$ where 
$p'X'\beta_3\beta_1\sim p'X'\beta_3\beta_2\beta_1$. 
\end{claim}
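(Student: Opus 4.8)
The plan is to read the statement as exactly the announced immediate corollary of \cref{cl:congruence}: only two things need checking, namely that the shortened sequence is a genuine computation of $\calM$, and that its final configuration $p'X'\beta_3\beta_1$ is $\sim$-equivalent to the original final configuration $p'X'\beta_3\beta_2\beta_1$.

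First I would verify that the shortened computation is valid. The relation $pX\gt{w_3}p'X'\beta_3$ is a configuration-level relation whose every step inspects only the current topmost stack symbol; by the bottom-of-stack convention (which guarantees the stack is never emptied) it is insensitive to whatever sits below the active portion. Hence it can be applied with an arbitrary suffix appended at the bottom, so appending $\beta_1$ in place of $\beta_2\beta_1$ yields $pX\beta_1\gt{w_3}p'X'\beta_3\beta_1$. Prefixing the untouched initial segment $p_0X_0\gt{w_1}pX\beta_1$ then gives the desired $p_0X_0\gt{w_1}pX\beta_1\gt{w_3}p'X'\beta_3\beta_1$; note that the loop $pX\gt{w_2}pX\beta_2$ is used only to describe the original computation and is simply dropped here.

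Second, to establish $p'X'\beta_3\beta_1\sim p'X'\beta_3\beta_2\beta_1$, I would instantiate \cref{cl:congruence} by taking its ``$p\beta$'' to be the whole $p'X'\beta_3$, its ``$\gamma$'' to be $\beta_1$, and its ``$\gamma'$'' to be $\beta_2\beta_1$. The hypothesis of \cref{cl:congruence} then demands $q\beta_1\sim q\beta_2\beta_1$ for every $q\in\ds(p'X'\beta_3)$, which is precisely the assumption $q\beta_2\beta_1\sim q\beta_1$ of the present claim (using that $\sim$ is symmetric), and its conclusion reads exactly $p'X'\beta_3\beta_1\sim p'X'\beta_3\beta_2\beta_1$.

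Both parts being routine, I do not expect a real obstacle; the only point deserving care is bookkeeping of the roles, i.e.\ treating all of $p'X'\beta_3$ as the single ``$p\beta$'' of \cref{cl:congruence}, and recording that the context-independence of pushdown steps is exactly what legitimizes replacing $\beta_2\beta_1$ by $\beta_1$ below the part of the stack manipulated during $pX\gt{w_3}p'X'\beta_3$.
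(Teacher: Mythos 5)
Your proposal is correct and is exactly the argument the paper intends: the paper gives no separate proof, presenting \cref{cl:shorten} as an immediate corollary of \cref{cl:congruence}, and your two steps (suffix-stability of the step relation to validate $pX\beta_1\gt{w_3}p'X'\beta_3\beta_1$, then instantiating \cref{cl:congruence} with $p\beta:=p'X'\beta_3$, $\gamma:=\beta_1$, $\gamma':=\beta_2\beta_1$) fill in precisely the routine details. One tiny refinement: the context-independence you invoke follows already from the definition of the induced relations (each rule gives $pX\alpha\gt{a}q\gamma\alpha$ for \emph{all} $\alpha\in\Gamma^*$, and no intermediate configuration of $pX\gt{w_3}p'X'\beta_3$ can have an empty stack since steps require a topmost symbol), so the bottom-of-stack convention is not actually needed here.
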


The $i$-th level-configuration
	in~(\ref{eq:stairseq}) is 
reached by the computation
	$p_0X_0\gt{v_0v_1\cdots
	v_{i-1}}p_iX_i\alpha_i\alpha_{i-1}\cdots\alpha_1$. 
	It can happen that there are $j_1,j_2$, $0\leq j_1<j_2\leq i$ such
	that $p_{j_1}X_{j_1}=p_{j_2}X_{j_2}$ and 
	$q\alpha_{j_2}\alpha_{j_2-1}\cdots\alpha_1\sim
	q\alpha_{j_1}\alpha_{j_1-1}\cdots\alpha_1$ for all $q\in
	\ds(p_iX_i\alpha_i\alpha_{i-1}\cdots\alpha_{j_2+1})$.
	In this case we can shorten the computation as in
	\cref{cl:shorten}, where 
	$v_{j_1}v_{j_1+1}\cdots v_{j_2-1}$ corresponds to the omitted
	$w_2$.
The resulting shorter computation might be possible to be repeatedly
shortened further (if it can be presented so that the conditions of~\cref{cl:shorten} are
satisfied). 
Now for each $i\geq 1$ we fix a (stair-factorized) computation
\begin{equation}\label{eq:stairseqtoi}
p_{i,0}X_{i,0}\gt{v_{i,0}}p_{i,1}X_{i,1}\alpha_{i,1}\gt{v_{i,1}}p_{i,2}X_{i,2}
		\alpha_{i,2}\alpha_{i,1}
		\ \cdots\  \gt{v_{i,n_i-1}}
		p_{i,n_i}X_{i,n_i}\alpha_{i,n_i}\alpha_{i,n_i-1}\cdots\alpha_{i,1}
\end{equation}
that has arisen by a maximal sequence of the above shortenings of the
prefix
\begin{center}
$p_0X_0\gt{v_0v_1\cdots
v_{i-1}}p_iX_i\alpha_i\alpha_{i-1}\cdots \alpha_1$
of~(\ref{eq:stairseq}). 
\end{center}
Hence
\ $p_{i,0}X_{i,0}=p_0X_0$,
\ $p_{i,n_i}X_{i,n_i}=p_iX_i$, 
\ $\alpha_{i,n_i},\alpha_{i,n_i-1},\dots,\alpha_{i,1}$ is a
subsequence of
\\
$\alpha_i, \alpha_{i-1},\dots, \alpha_1$,
\ and 
\ $p_{i,n_i}X_{i,n_i}\alpha_{i,n_i}\alpha_{i,n_i-1}\cdots\alpha_{i,1}\sim 
p_iX_i\alpha_i\alpha_{i-1}\cdots \alpha_1$.

\begin{claim}\label{cl:shortenedarelong}
	For each $\ell\in\Nat$ there is $i$ such that $n_i>\ell$
	(where $n_i$ is from~(\ref{eq:stairseqtoi})).
\end{claim}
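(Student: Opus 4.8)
The plan is to argue by contradiction. Suppose there is an $\ell\in\Nat$ such that $n_i\leq\ell$ for all $i$. I would then derive that the shortened level-configurations $p_{i,n_i}X_{i,n_i}\alpha_{i,n_i}\cdots\alpha_{i,1}$ range over a fixed \emph{finite} set of configurations, which clashes with the fact that they realize infinitely many $\sim$-classes.

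First I would note that every piece $\alpha_{i,j}$ occurring in~\eqref{eq:stairseqtoi} is a nonempty suffix of the right-hand side of a rule in $R$; this is inherited from \cref{cl:stairfactor}, because $\alpha_{i,n_i},\ldots,\alpha_{i,1}$ is a subsequence of $\alpha_i,\ldots,\alpha_1$, and the shortenings of \cref{cl:shorten} only delete whole intermediate segments and never create new pieces. Since $R$ is finite, each such suffix has length at most a constant $M$ (the maximal length of a rule's right-hand side). Under the assumption $n_i\leq\ell$, the stack $X_{i,n_i}\alpha_{i,n_i}\cdots\alpha_{i,1}$ of a shortened level-configuration thus has length at most $1+\ell M$. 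As $Q$ and $\Gamma$ are finite, there are only finitely many configurations with stack length bounded by $1+\ell M$, so all shortened level-configurations lie in one fixed finite set $\mathcal{F}$, and $\mathcal{F}$ meets only finitely many $\sim$-classes.

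Next I would invoke the observation made just before \cref{cl:congruence}: since $\calL(p_iX_i\alpha_i\cdots\alpha_1)\smallsetminus\{\varepsilon\}=(v_0\cdots v_{i-1}\backslash L)\smallsetminus\{\varepsilon\}$ and the quotients $a_1\cdots a_j\backslash L$ are pairwise distinct by \cref{cl:infseq}, no $\sim$-class can contain three level-configurations; hence the infinitely many original level-configurations realize infinitely many $\sim$-classes. By the construction of~\eqref{eq:stairseqtoi} as a maximal sequence of shortenings we have $p_{i,n_i}X_{i,n_i}\alpha_{i,n_i}\cdots\alpha_{i,1}\sim p_iX_i\alpha_i\cdots\alpha_1$ for every $i$, so the shortened level-configurations realize the same infinitely many $\sim$-classes. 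This contradicts their confinement to the finite set $\mathcal{F}$, and the claim follows.

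I do not expect a genuine obstacle, as the argument is a finiteness (pigeonhole) count layered on the already-established congruence facts. The only point needing care is the bookkeeping that a shortening neither enlarges the pool of admissible pieces nor increases their number beyond $n_i$, so that the bound $n_i\leq\ell$ really yields bounded stack length; once this is verified, the conflict between ``finitely many configurations'' and ``infinitely many $\sim$-classes'' is immediate.
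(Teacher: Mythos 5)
Your proof is correct and takes essentially the same route as the paper's: both arguments combine the facts that the shortened configurations in~(\ref{eq:stairseqtoi}) represent the same infinitely many $\sim$-classes as the original level-configurations and that every piece $\alpha_{i,j}$ has length bounded by the longest right-hand side of a rule in $R$, so bounded $n_i$ would confine them to a finite set of configurations. The paper states this directly (infinitely many $\sim$-classes force unbounded configuration lengths, hence unbounded $n_i$), while you run the identical count contrapositively as a pigeonhole contradiction; there is no substantive difference.
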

\begin{proof}
As already discussed, 
the set of level-configurations represents infinitely many
	$\sim$-classes. The last configurations of
	computations~(\ref{eq:stairseqtoi}) represent the same
	infinite set of $\sim$-classes, and their lengths 
 thus cannot
	be bounded;
since the lengths of all $\alpha_{i,j}$ are bounded (they are
	shorter	than the longest right-hand sides of the rules in
	$R$), the claim is clear.
\end{proof}

Now we come to a crucial claim in our proof of
\cref{lem:nonregwitness}.
Besides the notation $\ds(p\alpha)$ we also 
	introduce $\es(p\alpha)$
(the by-$\varepsilon$-reached down-states of $p\alpha$), 
by putting
\begin{center}	 
	$\es(p\alpha)=\{q\mid p\alpha\gt{\varepsilon}q\}$.
\end{center}
Hence $\es(p\alpha)\subseteq\ds(p\alpha)$, and 
	$|\es(p\alpha)|\leq 1$ (due to the determinism of the DPDA
	$\mathcal{M}$).

We recall that $p\alpha\sim q\beta$ means $\calL(p\alpha)= \calL(q\beta)$.
	To handle the special case of the empty word $\varepsilon$,  
	we also define a (much) coarser equivalence $\sim_0$:
we	put $p\alpha\sim_0 q\beta$ if 
$\varepsilon$ either belongs to both
	$\calL(p\alpha)$ and $\calL(q\beta)$, or belongs to none of
	them. 

	\begin{claim}\label{cl:Ramsey}
	There is 
a constant $\textsc{B}\in\Nat$ determined by the DPDA $\mathcal{M}$ such that
for all $i\in\Nat$ where $n_i> B$ the final configuration in~(\ref{eq:stairseqtoi})
can be written as
	\begin{center}
		$p_{i,n_i}X_{i,n_i}\alpha_{i,n_i}\alpha_{i,n_i-1}\cdots\alpha_{i,1}=\bar{p}\bar{X}\beta\gamma\delta$
	\end{center}
		where the following conditions hold:
\begin{enumerate}		
	\item
		$\gamma=\alpha_{i,j}\alpha_{i,j-1}\cdots\alpha_{i,j'{+}1}$
		where 
		$n_i\geq j>j'\geq n_i{-}B$
and $p_{i,j}X_{i,j}=p_{i,j'}X_{i,j'}$
\\		(and
		$\beta=\alpha_{i,n_i}\alpha_{i,n_i-1}\cdots\alpha_{i,j{+}1}$,
	$\delta=\alpha_{i,j'}\alpha_{i,j'-1}\cdots\alpha_{i,1}$);
\item
the sets $\ds(\bar{p}\bar{X}\beta)$ and
		$\ds(\bar{p}\bar{X}\beta\gamma)$ are equal,
further being denoted by $\bar{Q}$;
		\item for each $q\in\bar{Q}$, if 
		$\es(q\gamma)=\{q'\}$, then 	
		$\es(q'\gamma)=\{q'\}$ (and $q'\in\bar{Q}$);
\item 
	each $q'\in\bar{Q}$ belongs to
				$\ds(q\gamma)$ for some
				self-containing	$q\in\bar{Q}$,		
where  $q\in\bar{Q}$ is \emph{self-containing} if $q\in
			\ds(q\gamma)$;

	\item 
there is
a state $q'\in\bar{Q}$
for which
$q'\gamma\delta\not\sim q'\delta$ and $q'\gamma\delta\sim_0
		q'\delta$.
\end{enumerate}
\end{claim}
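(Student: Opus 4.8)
The plan is to read the decomposition off the maximally shortened computation~(\ref{eq:stairseqtoi}) by a Ramsey argument over a finite ``behaviour'' assigned to stack segments. For a segment $\gamma$ I record its \emph{behaviour}: the popping relation $T_\gamma=\{(q,q')\mid q'\in\ds(q\gamma)\}\subseteq Q\times Q$, the partial map $q\mapsto\es(q\gamma)$, the finitely many bits saying whether the (unique) $\varepsilon$-popping of $\gamma$ from each entry state passes through an accepting state, and the endpoint pair $(p_{i,k}X_{i,k},p_{i,k'}X_{i,k'})$. All these data range over finite sets, so there is a constant $\textsc{B}$ (a Ramsey number for the number of colours, determined by $\calM$) such that whenever $n_i>\textsc{B}$, colouring the pairs $(k',k)$ with $n_i{-}\textsc{B}\le k'<k\le n_i$ by the behaviour of $\alpha_{i,k}\cdots\alpha_{i,k'+1}$ yields a monochromatic triple $k''<k'<k$. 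Behaviour is multiplicative under concatenation of segments, so the common type is \emph{idempotent}. Taking $\gamma$ to be the segment between two of these points (whence $p_{i,j}X_{i,j}=p_{i,j'}X_{i,j'}$) secures condition~1 with $\beta,\delta$ as prescribed.

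For condition~2 I track $D_k=\ds(p_{i,n_i}X_{i,n_i}\alpha_{i,n_i}\cdots\alpha_{i,k+1})$; appending a type-$\gamma$ segment below transforms $D_k$ by the image map $S\mapsto T_\gamma(S)$. Along the homogeneous triple $D$ is thus moved downwards by the idempotent $T_\gamma$, so $D_{k''}=T_\gamma(D_{k'})=T_\gamma(T_\gamma(D_k))=T_\gamma(D_k)=D_{k'}$; choosing $j=k'$ and $j'=k''$ gives $D_j=D_{j'}$, i.e. $\ds(\bar p\bar X\beta)=\ds(\bar p\bar X\beta\gamma)=:\bar Q$. Conditions~3 and~4 then fall out of idempotency: the partial map $f\colon q\mapsto\es(q\gamma)$ satisfies $f\circ f=f$, so every value is a fixed point, which is exactly condition~3; and $T_\gamma$ restricted to $\bar Q$ is an idempotent relation with $T_\gamma(\bar Q)=\bar Q$ (by condition~2), so chasing any $q'\in\bar Q$ through $q'\in T_\gamma(q_0),\,q_0\in T_\gamma(q_1),\dots$ produces, in the finite set $\bar Q$, a cycle on which $T_\gamma=T_\gamma\circ T_\gamma$ yields a self-containing $q$ with $q'\in\ds(q\gamma)$, which is condition~4.

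Condition~5 is where non-regularity is finally spent, and I expect it to be the main obstacle. Since~(\ref{eq:stairseqtoi}) is \emph{maximally} shortened, \cref{cl:shorten} cannot be applied to the $\gamma$-block, so not all $q\in\bar Q$ satisfy $q\gamma\delta\sim q\delta$; fix $q_1\in\bar Q$ with $q_1\gamma\delta\not\sim q_1\delta$ and let $s$ be a shortest word in the symmetric difference $\calL(q_1\gamma\delta)\,\triangle\,\calL(q_1\delta)$. If $s\neq\varepsilon$ then $\varepsilon$ is not separating, i.e. $q_1\gamma\delta\sim_0 q_1\delta$, and $q'=q_1$ already witnesses condition~5. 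The delicate case is $s=\varepsilon$, where the only obstruction to removing $\gamma$ lives on the empty word. Here the plan is to show such obstructions cannot be the sole ones: if for every pumpable block in the tower the languages $q\gamma\delta$ and $q\delta$ agreed on all \emph{non-empty} words for all $q\in\bar Q$, then by the $\sim_0$-analogue of the congruence \cref{cl:congruence} each level-configuration would differ from a bounded-height configuration by at most $\varepsilon$; but up to a $\{\varepsilon\}$-perturbation finitely many bounded-height configurations carry only finitely many $\sim$-classes, contradicting that the level-configurations represent infinitely many $\sim$-classes. Hence the decomposition can be chosen so that some $q'\in\bar Q$ separates $q'\gamma\delta$ from $q'\delta$ by a \emph{non-empty} word, which is exactly $q'\gamma\delta\not\sim q'\delta$ together with $q'\gamma\delta\sim_0 q'\delta$.

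The hardest point is to make this last step rigorous for \emph{each} individual $i$ with $n_i>\textsc{B}$, as the claim is stated pointwise rather than for infinitely many $i$: the global non-regularity argument excludes the degenerate $\varepsilon$-only situation only in the aggregate, and the remaining work is to localise it, using the idempotent $\es$- and $\ds$-structure from conditions~3 and~4 to relocate a genuine, non-empty-word obstruction into the single homogeneous window selected for that $i$.
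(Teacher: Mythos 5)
Your Ramsey set-up and your handling of conditions 1--4 are sound and essentially the paper's own: the paper likewise colours pairs of prefixes of the final configuration of~(\ref{eq:stairseqtoi}) by the endpoint pairs $p_{i,j}X_{i,j}$ together with the per-state data $(\ds(q\nu),\es(q\nu))_{q}$, and your idempotency arguments (fixed points of the partial $\es$-map for condition 3, chains/cycles in the transitive popping relation for condition 4) are the same as the paper's monochromaticity/``transitivity'' arguments. The genuine gap is condition 5, precisely where you flag trouble. The paper's colour is \emph{not} the purely local behaviour of a segment: it also contains the two suffix-relative sets $\Qneq=\{q'\mid q'\nu\rho\not\sim q'\rho\}$ and $\Qneqzero=\{q'\in \Qneq\mid q'\nu\rho\sim_0 q'\rho\}$, where $\rho$ is the actual stack content \emph{below} the edge. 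These still range over subsets of $Q$, so the colour set stays finite and $B$ is determined by $\calM$. With them in the colour, a monochromatic clique of size $3$ (this is why three nested prefixes $\bar p\bar X\beta$, $\bar p\bar X\beta\gamma$, $\bar p\bar X\beta\gamma\bar\gamma$ are taken, with $\bar\gamma$ absorbed into $\delta=\bar\gamma\rho$) finishes condition 5 pointwise: $\sim_0$ has only two classes, so for each $q'$ in the common set $Q'$, two of the three configurations $q'\gamma\bar\gamma\rho$, $q'\bar\gamma\rho$, $q'\rho$ agree on $\varepsilon$; since all three edges carry the same $\Qneq$ and $\Qneqzero$, this forces $Q'_0=Q'$, and maximality of the shortening gives $Q'\neq\emptyset$. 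Your colouring omits these semantic components, so monochromaticity tells you nothing about how $q'\gamma\delta$ compares with $q'\delta$, and you are left stranded in your ``$s=\varepsilon$'' case.

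Your fallback for that case does not work as sketched, for two concrete reasons. First, the bad case is not that ``the only obstruction lives on the empty word'': it is that for every $q'\in\bar Q$ with $q'\gamma\delta\not\sim q'\delta$, the word $\varepsilon$ lies in the symmetric difference --- the languages may simultaneously differ on non-empty words, so no ``shortening up to an $\{\varepsilon\}$-perturbation'' in the spirit of \cref{cl:shorten} is available, and your counting of $\sim$-classes of bounded-height configurations never gets off the ground. Second, even if repaired, such an aggregate non-regularity argument could at best yield the decomposition for some, or infinitely many, $i$, whereas the claim quantifies over \emph{all} $i$ with $n_i>B$; you acknowledge this localisation problem yourself but leave it unresolved, and the $\es$-/$\ds$-idempotency from conditions 3 and 4 alone cannot resolve it, since the states witnessing failure of $\sim$ may differ for the three possible drops of a middle block. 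The repair is exactly the paper's move: enrich the colour with $\Qneq$ and $\Qneqzero$ and pass from a monochromatic pair to a monochromatic triple.
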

\begin{proof}
We fix some $i$ with $n_i$ 
	larger than a constant $B$ determined by $\calM$ as described
	below
	(there are such $i$ by \cref{cl:shortenedarelong}).
For convenience 
we put $p_{i,n_i}X_{i,n_i}=\bar{p}\bar{X}$, $n_i=n$, and
	$\alpha_{i,j}=\bar{\alpha}_j$, hence the final configuration
	in~(\ref{eq:stairseqtoi}) is
$p_{i,n_i}X_{i,n_i}\alpha_{i,n_i}\alpha_{i,n_i-1}\cdots\alpha_{i,1}=
	\bar{p}\bar{X}\bar{\alpha}_{n}\bar{\alpha}_{n-1}\cdots\bar{\alpha}_{1}
	$. 
	We view the $n{+}1$ prefixes 
	\begin{center}
$\bar{p}\bar{X},\ \bar{p}\bar{X}\bar{\alpha}_{n},\
\bar{p}\bar{X}\bar{\alpha}_{n}\bar{\alpha}_{n-1},\ 
	\bar{p}\bar{X}\bar{\alpha}_{n}\bar{\alpha}_{n-1}\bar{\alpha}_{n-2},\ 
		\dots,\
		\bar{p}\bar{X}\bar{\alpha}_{n}\bar{\alpha}_{n-1}\cdots\bar{\alpha}_{1}$ 
	\end{center}		
as the vertices
of a complete graph with coloured edges. 

For 
$\bar{p}\bar{X}\bar{\alpha}_{n}\bar{\alpha}_{n-1}\cdots\bar{\alpha}_{1}=
\bar{p}\bar{X}\mu\nu\rho$, where 
	$\mu=\bar{\alpha}_n\bar{\alpha}_{n-1}\cdots\bar{\alpha}_{j{+}1}$,
	$\nu=\bar{\alpha}_j\bar{\alpha}_{j-1}\cdots\bar{\alpha}_{j'{+}1}$,
	and 
	$\rho=\bar{\alpha}_{j'}\bar{\alpha}_{j'-1}\cdots\bar{\alpha}_{1}$,
	$n\geq j>j'\geq 0$, 
the edge between 
	the vertices $\bar{p}\bar{X}\mu$ and $\bar{p}\bar{X}\mu\nu$
 has the following tuple as its \emph{colour}:
		\begin{center}
		$\left(\,p_{i,j}X_{i,j},\ p_{i,j'}X_{i,j'},\ 
		\ds(\bar{p}\bar{X}\mu),\
		\ds(\bar{p}\bar{X}\mu\nu),\ (\ds(q\nu),
		 \es(q\nu))_{q\in\ds(\bar{p}\bar{X}\mu)},\ \Qneq,\
		\Qneqzero\right)$			
		\end{center}	
where
	$\Qneq=\{q'\in\ds(\bar{p}\bar{X}\mu)\mid q'\nu\rho\not\sim
	q'\rho\}$ and 
	$\Qneqzero=\{q'\in \Qneq\mid
	q'\nu\rho\sim_0 q'\rho\}$ (and $p_{i,j}X_{i,j},\
	p_{i,j'}X_{i,j'}$ are taken from~(\ref{eq:stairseqtoi})).

	Since the set of colours is bounded (by a constant determined
	by $\calM$),
 Ramsey's theorem yields a bound $B$ guaranteeing that 
there is a
	monochromatic clique of size $3$
among the vertices
	$\bar{p}\bar{X}$,\ $\bar{p}\bar{X}\bar{\alpha}_{n}$,\ 
	$\bar{p}\bar{X}\bar{\alpha}_{n}\bar{\alpha}_{n-1}$,\ $\dots$,\ 
	$\bar{p}\bar{X}\bar{\alpha}_{n}\bar{\alpha}_{n-1}\cdots\bar{\alpha}_{n-B}$.
(We have soundly chosen $i$ so that $n=n_i$ is bigger than $B$.)
	We fix
such a monochromatic clique $\textsc{MC}$,
denoting
	its
$3$ vertices as
	\begin{center}
	$\bar{p}\bar{X}\beta$,\ $\bar{p}\bar{X}\beta\gamma$,\
		$\bar{p}\bar{X}\beta\gamma\bar{\gamma}$, 
		and its 
colour as 
		$\,\textsc{C}=(p'X',p'X',\bar{Q},\bar{Q},(\calD_q,\calE_q)_{q\in
		\bar{Q}},Q',Q'_0)$.
\end{center}
		This is sound, since the fact that
both edges 
$\{\bar{p}\bar{X}\beta, \bar{p}\bar{X}\beta\gamma\}$ and 
$\{\bar{p}\bar{X}\beta\gamma, \bar{p}\bar{X}\beta\gamma\bar{\gamma}\}$
have the same colour entails that the first component in this colour is the same as
		the second component, and the third component is the same as the
		fourth component.

We now show that the conditions $1$--$5$ 
are satisfied for the presentation of
$\bar{p}\bar{X}\bar{\alpha}_{n}\bar{\alpha}_{n-1}\cdots\bar{\alpha}_{1}$
as
$\bar{p}\bar{X}\beta\gamma\delta$,
where
$\delta=\bar{\gamma}\bar{\alpha}_{k}\bar{\alpha}_{k-1}\cdots\bar{\alpha}_1$
for the respective $k$.

Conditions $1$ and $2$ are trivial (due to the colour $\textsc{C}$).

Condition 3: Let
	$q\in\bar{Q}$ and
$\es(q\gamma)=\{q'\}$ (hence also $q'\in\bar{Q}$). Then
$\calE_{q}=\es(q\gamma)=\es(q\gamma\bar{\gamma})=\{q'\}$
(since $\textsc{MC}$ is	monochromatic).
	This entails
	$\es(q'\bar{\gamma})=\{q'\}$, hence $\calE_{q'}=\{q'\}$,
	which in turn entails
	$\es(q'\gamma)=\{q'\}$. 

Condition $4$:
We first note a general fact:
$\ds(p\mu\nu)=\bigcup_{q\in\ds(p\mu)}\ds(q\nu)$.
Since $\bar{Q}= \ds(\bar{p}\bar{X}\beta)=
\ds(\bar{p}\bar{X}\beta\gamma)=\ds(\bar{p}\bar{X}\beta\gamma\bar{\gamma})$, 
for each $q'\in\bar{Q}$ there is thus
$q\in\bar{Q}$ such that $q'\in\calD_{q}$.
We also have the following ``transitivity'':
if $q_1,q_2,q_3\in\bar{Q}$, $q_1\in\calD_{q_2}$, and
 $q_2\in\calD_{q_3}$, then $q_1\in\calD_{q_3}$ (since  $\textsc{MC}$
 is monochromatic). For any $q'\in\bar{Q}$ there is clearly a
 ``chain'' $q'=q_1, q_2, q_3, \dots, q_\ell$ where $\ell>1$,
 $q_j\in\calD_{q_{j+1}}$
 for all $j\in[1,\ell{-}1]$, and $q_j=q_\ell$ for some $j<\ell$.
By the above transitivity, $q_\ell$ is self-containing
($q_\ell\in\calD_{q_\ell}$ and thus $q_\ell\in\ds(q_\ell\gamma)$)
and 
$q'\in\calD_{q_\ell}$ (hence $q'\in\ds(q_\ell\gamma)$). 

Condition $5$:
For any three configurations at least two belong to the same
$\sim_0$-class.
Since the edges among the vertices  
$\bar{p}\bar{X}\beta$, $\bar{p}\bar{X}\beta\gamma$,
$\bar{p}\bar{X}\beta\gamma\bar{\gamma}$ 
have the same $Q'_0$ in their colour $\textsc{C}$, we get 
that $Q'_0=Q'$, and thus also 
$q'\gamma\delta\sim_0 q'\delta$ for all $q'\in \bar{Q}$ such that
$q'\gamma\delta\not\sim q'\delta$.
Now if for all $q'\in\bar{Q}$ we had $q'\gamma\delta\sim q'\delta$
(which includes the case $\bar{Q}=\emptyset$),
then we would get a contradiction with our choice
of~(\ref{eq:stairseqtoi}) since it could have been shortened as in
\cref{cl:shorten}.
\end{proof}

Now we are already close to \cref{lem:nonregwitness}:

\begin{claim}\label{cl:weakerversion} 
There are $v\in\Sigma^*$, $x,w,y,z\in\Sigma^+$, $p,q\in Q$,
$X\in\Gamma$,
$\gamma\in\Gamma^+$, $\delta\in\Gamma^*$ such that 
$p_0X_0\gt{v}pX\delta$, $pX\gt{x} pX\gamma$,	$pX\gt{w}q$,
	$q\gamma\gt{y}q$, and
	\begin{itemize}
\item
	either 
 $z\in\calL(q\delta)$ and 
	$z\not\in\calL(q\gamma^\ell\delta)$ for all $\ell>0$,
\item
			or  $z\not\in\calL(q\delta)$ and 
	$z\in\calL(q\gamma^\ell\delta)$ for all $\ell>0$.
	\end{itemize}	
\end{claim}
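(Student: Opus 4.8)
The plan is to read off $v,x,pX,\gamma,\delta$ from the ``stable'' presentation supplied by \cref{cl:Ramsey} and then to build the state $q$, the input words $w,y$ and the test word $z$ around a \emph{self-containing} state. Concretely, property~(1) of \cref{cl:Ramsey} already gives a repeated level $pX=p_{i,j}X_{i,j}=p_{i,j'}X_{i,j'}$, hence a stack-increasing loop $pX\gt{x}pX\gamma$ (where $x$ is the input read between positions $j'$ and $j$) together with the prefix $p_0X_0\gt{v}pX\delta$ taken from~(\ref{eq:stairseqtoi}); this settles the first two required transitions. For $q$ I would pick a self-containing state $\hat{q}\in\bar{Q}$ as provided by property~(4): since $pX\gt{w'}\bar{p}\bar{X}\beta$ for the appropriate infix $w'$ and $\bar{Q}=\ds(\bar{p}\bar{X}\beta)$, the state $\hat{q}$ is reachable, giving $pX\gt{w}\hat{q}$, and self-containment $\hat{q}\in\ds(\hat{q}\gamma)$ yields the loop $\hat{q}\gamma\gt{y}\hat{q}$. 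The one point needing care here is $y\in\Sigma^+$: a purely $\varepsilon$-realised loop would force $\es(\hat{q}\gamma)=\{\hat{q}\}$, and then property~(3) propagates this to $\hat{q}\gamma^\ell\delta\gt{\varepsilon}\hat{q}\delta$ for every $\ell$, collapsing all these configurations into one $\sim$-class and contradicting the genuine difference exploited below; so the loop may be realised by a nonempty input word.

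It remains to produce the fixed nonempty word $z$ and to verify one of the two membership patterns. The raw material is property~(5): a state $q_5\in\bar{Q}$ with $q_5\gamma\delta\not\sim q_5\delta$ but $q_5\gamma\delta\sim_0 q_5\delta$. The second half is crucial, because the two configurations agree on $\varepsilon$, so any word separating them is \emph{nonempty}, which is exactly what we need for $z\in\Sigma^+$. Using property~(4) I would route this difference through the self-containing state, fixing $u$ with $\hat{q}\gamma\gt{u}q_5$ so that $\hat{q}\gamma\mu\gt{u}q_5\mu$ for every $\mu$, and thereby transport the separating word to the configurations $\hat{q}\gamma^\ell\delta$.

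The genuinely hard part is \emph{uniformity}: I must exhibit a single $z$ whose membership in $\calL(\hat{q}\gamma^\ell\delta)$ is the same for all $\ell>0$ while differing at $\ell=0$, and this is the step I expect to cost the most, since an unbounded stack fights against a fixed test word. The plan is a window/stabilisation argument: reading a fixed $z$ from the tall stack $\hat{q}\gamma^\ell\delta$ either stays inside a bounded top portion --- in which case the outcome cannot depend on how many further copies of $\gamma$ lie below, giving constancy for large $\ell$ --- or it dives downwards, which (the number of input steps being bounded) can only happen through $\varepsilon$-popping; property~(3) forces such a dive to pass through a state $q'$ with $\es(q'\gamma)=\{q'\}$, so that all remaining $\gamma$-blocks are erased uniformly and the computation reduces to reading a fixed suffix from the fixed configuration $q'\delta$. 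Either way the membership of $z$ stabilises, and property~(2), i.e.\ the equality $\ds(\bar{p}\bar{X}\beta)=\ds(\bar{p}\bar{X}\beta\gamma)=\bar{Q}$, together with \cref{cl:congruence}, keeps the reachable down-states --- and hence the behaviour on the relevant suffixes --- invariant under inserting extra copies of $\gamma$.

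Finally I would pin the flip to the boundary between $\ell=0$ and $\ell\geq 1$. Having a word that separates some pumped level from the base (obtained by transporting the property~(5) difference) and knowing that membership stabilises, I would coarsen the loop, replacing $(x,\gamma,y)$ by $(x^t,\gamma^t,y^t)$ for $t$ past the stabilisation threshold, so that level $1$ already lands in the stable regime. Crucially the stabilised value must differ from the value at $\ell=0$: if it did not, then every sufficiently deep $\hat{q}\gamma^N\delta$ would agree with $\hat{q}\delta$ on all words, and tracing this back through the transport map and \cref{cl:congruence} would let computation~(\ref{eq:stairseqtoi}) be shortened as in \cref{cl:shorten}, contradicting its maximality --- the very mechanism that produced property~(5). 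Hence the chosen $z$ is constant over all $\ell>0$ and takes the opposite value at $\ell=0$; reading off whether $z\in\calL(\hat{q}\delta)$ selects exactly one of the two advertised cases, with nonemptiness of $z$ guaranteed by the $\sim_0$-clause. The remaining bookkeeping --- ensuring $v,x,w,y,z\in\Sigma^+$, e.g.\ folding one copy of $x$ into $v$ should $v=\varepsilon$ --- is routine.
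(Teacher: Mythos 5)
Most of your construction tracks the paper's own proof: extracting $v,x,pX,\gamma,\delta$ from condition~1 of \cref{cl:Ramsey}, taking a self-containing $q$ from condition~4 so that the witness state of condition~5 lies in $\ds(q\gamma)$, transporting the separating word via a fixed $u$ with $q\gamma\gt{u}q'$, arguing $y\neq\varepsilon$ and $z\neq\varepsilon$ through the $\sim_0$-clause, and your bounded-window versus $\varepsilon$-dive dichotomy (with condition~3 forcing the dive to erase all remaining $\gamma$-blocks uniformly) is exactly the paper's two-case stabilisation analysis.

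The genuine gap is in your final step. Write $\chi(\ell)=1$ iff $z\in\calL(q\gamma^{\ell}\delta)$. What you have actually established is: $\chi$ is eventually constant, say with value $c$, and $\chi(1)\neq\chi(2)$ --- the transported condition-5 difference separates levels $1$ and $2$ and says nothing about level $0$. Your replacement of $(x,\gamma,y)$ by $(x^{t},\gamma^{t},y^{t})$ gives $\chi'(\ell)=\chi(t\ell)$, hence $\chi'(0)=\chi(0)$ and $\chi'(\ell)=c$ for all $\ell\geq 1$; this yields the claim only if $\chi(0)\neq c$, and your argument for that inequality does not go through: agreement of $q\gamma^{N}\delta$ with $q\delta$ on the \emph{single} word $z$ is far weaker than the $\sim$-equivalence (``on all words'') that \cref{cl:shorten} needs, so no shortening of the computation~(\ref{eq:stairseqtoi}) and hence no contradiction with its maximality can be extracted. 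Concretely, the pattern $\chi(0)=0$, $\chi(1)=1$, $\chi(\ell)=0$ for $\ell\geq 2$ is consistent with everything you proved, and there your scaled tuple separates nothing. The paper's fix avoids any claim about $\chi(0)$: since $\chi$ is eventually constant and $\chi(1)\neq\chi(2)$, there is a \emph{last} flip, i.e.\ some $\ell_0\geq 1$ with $\chi(\ell_0)\neq\chi(\ell)$ for all $\ell>\ell_0$; one then rebases by taking $\bar{\delta}=\gamma^{\ell_0}\delta$ and $\bar{v}=vx^{\ell_0}$ as the new $\delta,v$, which places the flip exactly at the $0/1$ boundary. (Scaling by powers does appear in the paper, but only afterwards, combined with \cref{cl:strongversion}, to upgrade this claim to condition~4 of \cref{lem:nonregwitness}.)
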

\begin{proof}
	We fix one $\bar{p}\bar{X}\beta\gamma\delta$ guaranteed 
	by \cref{cl:Ramsey} (satisfying the respective conditions
	$1$--$5$).  
There are  $v\in\Sigma^*$,
	$x,w,y,\bar{z}\in\Sigma^+$, $p,q\in Q$,
$X\in\Gamma$,
	$\gamma\in\Gamma^+$, $\delta\in\Gamma^*$, 
	$q'\in\ds(q\gamma)$  
	such that 
\begin{center}
	$p_0X_0\gt{v}pX\delta$,
	$pX\gt{x}pX\gamma$,
	$pX\gt{w}q$, $q\gamma\gt{y}q$, and
$\calL(q'\gamma\delta)$ and $\calL(q'\delta)$
 differ on $\bar{z}$
\end{center}
	(i.e., $\bar{z}\in
	(\calL(q'\gamma\delta)\smallsetminus\calL(q'\delta))
	\cup	(\calL(q'\delta)\smallsetminus\calL(q'\gamma\delta))$.
	\\
	(Indeed: The respective computation~(\ref{eq:stairseqtoi}) can be written
	$p_0X_0\gt{v}pX\delta\gt{x}
	pX\gamma\delta\gt{w'}
	\bar{p}\bar{X}\beta\gamma\delta$ where $x$ and $\gamma$ are
	nonempty.
	The claimed $q'$ and [nonempty] $\bar{z}$  are guaranteed by
	$5$ in
	\cref{cl:Ramsey}, and $q$ is a respective self-containing
	state from $4$.
	Since $q\in\ds(\bar{p}\bar{X}\beta)$ and $q\in\ds(q\gamma)$,
	we get
	$pX\gamma\delta\gt{w'w''}q\gamma\delta\gt{y}q\delta$, where
	$w''\neq\varepsilon$.
We also have $y\neq\varepsilon$, since
	otherwise $\ds(q\gamma)=\es(q\gamma)=\{q\}$, $q'=q$, and we
	could not have 
	$q\gamma\delta\not\sim q\delta$
and $q\gamma\delta\sim_0 q\delta$.)

	Since $q'\in\ds(q\gamma)$, we can fix $z'$ such that
	$q\gamma\gt{z'}q'$.
Hence the languages 
$\calL(q\gamma\gamma\delta)$ and $\calL(q\gamma\delta)$
differ on
	$z=z'\bar{z}$;
more generally,
	$\calL(q\gamma^{\ell+1}\gamma\delta)$ and
	$\calL(q\gamma^{\ell}\gamma\delta)$ differ on $y^\ell z$ for
	all $\ell\geq 0$.
Now we aim to find out for which $\ell$ we have
$z\in\calL(q\gamma^\ell\delta)$. 

We recall that
$\bar{Q}=\ds(\bar{p}\bar{X}\beta)=\ds(\bar{p}\bar{X}\beta\gamma)$;
hence $\bigcup_{\bar{q}\in\bar{Q}}\ds(\bar{q}\gamma)=\bar{Q}$.
Since $q\in\bar{Q}$, we get that
 $\ds(q\gamma^d)\subseteq\bar{Q}$ for all $d\in\Nat$ (by induction).
 We now
distinguish 
 two cases:
\begin{enumerate}
	\item 
For each prefix $z_1$ of $z$ and each
 $d\leq |z|$ we have:
if $q\gamma^d\gt{z_1}\bar{q}$,
		then $\es(\bar{q}\gamma)=\emptyset$.
	\item
There are a prefix $z_1$ of $z$,
		$d\leq |z|$,
		and $\bar{q},q''\in\bar{Q}$ such that 
		$q\gamma^d\gt{z_1}\bar{q}$ and
		$\es(\bar{q}\gamma)=\{q''\}$.
\end{enumerate}	
In the case $1$ we clearly have either 
$\forall \ell>|z|:z\in\calL(q\gamma^\ell\delta)$ or 
$\forall \ell>|z|:z\not\in\calL(q\gamma^\ell\delta)$ 
(here $\delta$ plays no role). 
In the case $2$ we recall that
$\bar{q}\gamma\gt{\varepsilon}q''$ entails that 
$\bar{q}\gamma^k\delta\gt{\varepsilon}q''\delta$ for all $k\geq 1$
(since $\es(q''\gamma)=\{q''\}$ by $3$ in
\cref{cl:Ramsey}).
Hence we  have
either $\forall \ell>|z|+1:z\in\calL(q\gamma^\ell\delta)$ or 
$\forall \ell>|z|+1:z\not\in\calL(q\gamma^\ell\delta)$.

Since $\calL(q\gamma^2\delta)$ and
$\calL(q\gamma^1\delta)$ differ on $z$, we deduce that there is
$\ell_0\geq 1$ such that 
		either
$z\in\calL(q\gamma^{\ell_0}\delta)$ and
$z\not\in\calL(q\gamma^{\ell}\delta)$ for all $\ell>\ell_0$,	
	or 
$z\not\in\calL(q\gamma^{\ell_0}\delta)$ and
$z\in\calL(q\gamma^{\ell}\delta)$ for all $\ell>\ell_0$.
Hence for $\bar{\delta}=\gamma^{\ell_0}\delta$ we have
		either
		$z\in\calL(q\bar{\delta})$ and
		$z\not\in\calL(q\gamma^{\ell}\bar{\delta})$ for all
		$\ell>0$,
	or 
		$z\not\in\calL(q\bar{\delta})$ and
		$z\in\calL(q\gamma^{\ell}\bar{\delta})$ for all $\ell>0$.
Since for  $\bar{v}=vx^{\ell_0}$ we have
$p_0X_0\gt{\bar{v}}pX\bar{\delta}$, the claim is proven.
\end{proof}

Claim~\ref{cl:weakerversion} is 
a weaker version of \cref{lem:nonregwitness}; it shows that
there is $L'\in\{L,\overline{L}\}$ such that 
$vx^mwy^mz\in L'$ and $vx^mwy^nz\not\in L'$ for	$m>n$.
To handle the case $m<n$, we have to find out 
for which $\ell$ we have $y^\ell z\in\calL(q\delta)$.
We thus look at the computation
	from 
	$q\delta$ on the infinite word $y^\omega$ (recalling our
	convention that this computation is infinite, stepwise reading
	the word $yyy\cdots$), and use the obvious fact that after a
	prefix this computation becomes ``periodic'' (either cycling among
	finitely many configurations, or increasing the stack
	forever).

\begin{claim}\label{cl:strongversion}	
	For any configuration $q\delta$ and words $y,z$ there are 
numbers	 $k\geq 0$ and
	$\scp > 0$ (``period'') such that 
	for all $\ell\geq k$ the remainder $(\ell\bmod \scp)$
	determines whether or not $\calL(q\delta)\ni y^\ell z$.
\end{claim}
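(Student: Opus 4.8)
The plan is to analyze the unique infinite computation of $\calM$ from $q\delta$ on the periodic word $y^\omega$, writing $C_\ell$ for the stable configuration reached after reading $y^\ell$, so that $C_0=q\delta$ and $C_{\ell+1}$ arises from $C_\ell$ by reading one block $y$. Since $y^\ell z\in\calL(q\delta)$ iff $z\in\calL(C_\ell)$, it suffices to prove that the Boolean sequence $b_\ell=[\,z\in\calL(C_\ell)\,]$ is ultimately periodic. I would split along the dichotomy mentioned just before the claim: either some configuration is repeated in this computation, or all the visited configurations are pairwise distinct.

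If a configuration repeats, then by determinism of $\calM$ the whole step-by-step computation is ultimately periodic, hence so is the block sequence $(C_\ell)_\ell$ (the stack stays bounded here), and $b_\ell$ is immediately ultimately periodic. The substantial case is when all visited configurations are distinct, which forces the stack height to grow beyond every bound.

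For that case I would first record a regularity fact about the fixed test word $z$: acceptance ``from a stack'' is a regular property. Concretely, there is a deterministic finite automaton $B_z$ that scans a stack content $\eta$ from the top downwards and decides whether $z\in\calL(r\eta)$, its start state being determined by $r$. One builds $B_z$ from the standard summary of a deterministic pushdown run: while it has scanned the top $i$ symbols, $B_z$ carries either a final verdict (when reading $z$ together with its trailing popping $\varepsilon$-closure never descends below the scanned part) or the unique pair consisting of the current control state and the remaining input suffix at the moment the run is forced to pop one symbol deeper. Determinism makes this summary single-valued, and the bottom marker $\bot$ guarantees that a verdict is eventually produced. Setting up $B_z$ correctly --- in particular taming the trailing $\varepsilon$-pops, which may by themselves descend arbitrarily deep into the stack --- is the technical heart of the argument, and the step I expect to be the main obstacle.

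Given $B_z$, I would extract the periodic shape of the growing stack. Because all configurations are distinct and the stack is unbounded, infinitely many block boundaries are ``permanent levels'', meaning that nothing currently on the stack is ever popped again. Pigeonholing the finitely many control states and top symbols over these permanent levels produces two of them, at block indices $k<k+d$, that agree in state and top symbol; by determinism and the frozen-bottom property the sub-computation between them touches only the overhang and therefore repeats verbatim, yielding $C_{k+jd}=r\,\mu\,\tau^{\,j}\,\omega$ for all $j\ge 0$, with a fixed nonempty pushed block $\tau\in\Gamma^+$ and a fixed frozen part $\omega$. Reading a further fixed prefix $y^s$ with $0\le s<d$ acts, for all large $j$, only on the top and gives $C_{k+jd+s}=r^{(s)}\,\mu^{(s)}\,\tau^{\,j'}\,\omega$ with $j'=j-c_s$. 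Hence $b_{k+jd+s}$ equals the verdict of $B_z$ on $\mu^{(s)}\tau^{\,j'}\omega$ read top-down; since the run of a fixed DFA on $\mu^{(s)}\tau^{\,j'}\omega$ is ultimately periodic in $j'$, each residue class $s$ contributes an ultimately periodic subsequence of $(b_\ell)$. Taking a common period over the finitely many residues $s$ then makes $b_\ell$ itself ultimately periodic, which provides the required threshold $k$ and period $\scp$ and proves the claim.
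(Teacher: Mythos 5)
Your overall plan is parallel to the paper's (analyze the run on $y^\omega$, extract a block-aligned periodic stack shape, then decide membership of $z$ per residue class), but there is a genuine gap at the bridge from ``all configurations distinct, stack unbounded'' to the shape $C_{k+jd}=r\,\mu\,\tau^{j}\,\omega$: your assertion that infinitely many block boundaries are \emph{permanent levels} is false. Consider a DPDA with rules $pB\gt{a}q$ and $qA\gt{b}pBAA$, started from $pBA$, with $y=ab$: then $C_\ell=pBA^{\ell+1}$, all visited configurations are pairwise distinct and the stack grows without bound, yet in every block the top symbol of the current boundary configuration is popped (and the height even dips strictly below the boundary height mid-block), so \emph{no} boundary is permanent, not even in the lenient sense that the height never drops below it afterwards. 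This matters because without permanence the future of $C_\ell$ is not determined by its state and top symbol, which is what your pigeonhole-plus-determinism step needs; and since a single visible step may trigger an unboundedly long cascade of popping $\varepsilon$-steps, the dip below a boundary need not be bounded, so pigeonholing on a fixed-depth top window does not repair it either. What is true is that level positions (whose height is never undercut later) occur at arbitrary positions \emph{inside} blocks; one must pigeonhole them on state, top symbol \emph{and the offset of the position within $y$}, and then realign to block boundaries --- which is exactly why the paper's periodic form $\bar{q}\rho\mu^d\nu$ carries the buffer $\rho$ in front of the pumped part $\mu$. A smaller wrinkle of the same kind sits in your first case: a configuration repeating at two positions with different offsets in $y^\omega$ has different remaining input, so determinism yields ultimate periodicity only after pigeonholing (configuration, offset) pairs.

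On the other hand, your second phase is sound and genuinely different from the paper's. The paper, having the shape $\bar{q}\rho\mu^d\nu$, argues per offset $j$ by a dichotomy --- either the run of $z$ never reaches the bottom $\nu$ for large $d$, or it descends through the $\mu$-blocks and one extracts an $\varepsilon$-popping cycle $q'\mu^{\bar{d}}\gt{\varepsilon}q'$ with $\bar{d}\leq|Q|$ --- whereas your top-down stack automaton $B_z$ (tracking, per scanned prefix, either a verdict or the unique pair of control state and remaining suffix of $z$ at the moment the run must pop deeper, with the trailing $\varepsilon$-closure folded in) packages that case analysis once and for all; running a fixed DFA over $\mu^{(s)}\tau^{j'}\omega$ immediately gives ultimate periodicity in $j'$ per residue. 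So the proposal is repairable and its finishing device is arguably cleaner than the paper's, but the missing stair-factorization/realignment content is precisely the crux you elided, not the construction of $B_z$ that you flagged as the main obstacle.
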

\begin{proof}
We assume $y\neq\varepsilon$ (otherwise the claim is trivial).
For the infinite computation from $q\delta$ on $yyy\cdots$ 
	there are obviously $k_1\geq 0$,
$k_2> 0$, $\bar{q}\in Q$, and $\rho,\mu,\nu\in\Gamma^*$
such that the computation can be written
$q\delta\gt{y^{k_1}}\bar{q}\rho\nu\gt{y^{k_2}}\bar{q}\rho\mu\nu\gt{y^{k_2}}
	\bar{q}\rho\mu\mu\nu\gt{y^{k_2}}
	\bar{q}\rho\mu\mu\mu\nu\gt{y^{k_2}}\cdots$
where $\bar{q}\rho\gt{y^{k_2}}\bar{q}\rho\mu$.
	(We have $\mu=\varepsilon$ if the computation
visits only finitely many configurations, and otherwise
we consider the stair-factorization of the
	computation.)

	For each $j\in[0,k_2{-}1]$ we put
	$\bar{q}\rho\gt{y^{j}}\bar{q}\rho_j$, and 
we	
have two possible cases:
	\begin{enumerate}
		\item
There is $d_0\geq 0$ such that for all $d\geq d_0$
 performing $z$ from $\bar{q}\rho_j\mu^d\nu$ 
	does not reach $\nu$ at the bottom.
\item
There are $d_0\geq 0$, a prefix $z'$ of $z$, $q'\in Q$, and
		$\bar{d}\in[1,|Q|]$
			such that
$\bar{q}\rho_j\mu^{d_0}\gt{z'}q'$ and $q'\mu^{\bar{d}}\gt{\varepsilon}q'$.
	\end{enumerate}
	In the case $1$ either $\calL(q\delta)\ni y^{d\cdot k_2+j} z$ for all
	$d\geq d_0$, or $\calL(q\delta)\not\ni y^{d\cdot k_2+j} z$ for all
	$d\geq d_0$.
\\
In the case $2$, for each $d\geq 0$ we have   
	$q'\mu^d\gt{\varepsilon}q_{d}$ where $q_{d_1}=q_{d_2}$ if
	$d_1\equiv d_2\ (\bmod\ \bar{d})$. 
	Hence 
	for each $d\geq d_0$, the (non)membership of $y^{d\cdot
	k_2+j}z$ in $\calL(q\delta)$ is determined by $(d\bmod
	\bar{d})$.

The claim is thus clear.
\end{proof}

Now we finish the proof of \cref{lem:nonregwitness}. We take the
notation from \cref{cl:weakerversion}; 
for the respective $q\delta, y, z$ we add $k,\scp$ from
\cref{cl:strongversion}. Let $k_0$ be a multiple of $\scp$ that is
bigger than $k$.
We now view $x^{k_0}$, $y^{k_0}$, $\gamma^{k_0}$ as new
$x,y,\gamma$, respectively.
Claims~\ref{cl:weakerversion} and~\ref{cl:strongversion} now yield the
statement of \cref{lem:nonregwitness}.

\section{Conclusion and Open Problems}
\label{concl}

In this paper, we have introduced a new notion of the $\mathcal{C}$-simple problem
that reduces to each problem in $\mathcal{C}$, being thus a conceptual
counterpart to the $\mathcal{C}$-hard problem to which each problem in
$\mathcal{C}$ reduces. We have illustrated this concept on the
definition of the \nrdcfl-simple problem that reduces to each
{\nrdcfl} language under the truth-table reduction by Mealy machines.
We have proven that the {\nrdcfl} language $L_\#=\{0^n1^n\mid n\geq 1\}$ is \nrdcfl-simple, 
and thus represents the simplest languages in the class \nrdcfl. This
result finds its application in expanding the known lower bound for
$L_\#$,
namely that $L_{\#}$ cannot be recognized by the neural network model 1ANN, to all {\nrdcfl} languages. Moreover, the class DCFLS of \nrdcfl-simple problems containing the regular languages is a strict subclass of DCFL and has similar closure properties as DCFL.

We note that the hardest context-free language $L_0$ 
by Greibach~\cite{DBLP:journals/siamcomp/Greibach73},
where each $L$ in CFL is an inverse homomorphic image of $L_0$ or
$L_0\smallsetminus\{\varepsilon\}$, 
can be viewed as CFL-hard w.r.t. a~many-one reduction based on Mealy
machines realizing the respective homomorphisms.
Our aims in the definition of
\nrdcfl-simple problems cannot be achieved by 
such a many-one reduction, hence we have generalized it to a truth-table reduction.
We can alternatively consider a general Turing reduction
that is implemented by a Mealy machine which queries the oracle at
special query states, each associated with a corresponding query
suffix, while its next transition from the query state depends on the
given oracle answer. The oracle Mealy machine then accepts an input
word if it reaches an accept state after reading the input. 
The language $L_\#$ proves to be
\nrdcfl-simple under this Turing reduction allowing for an
unbounded number of online oracle queries; this can be shown by \cref{cl:weakerversion} (a weaker version of \cref{lem:nonregwitness}).

It is natural to try 
extending our result to non-regular
nondeterministic (or at least unambiguous) context-free languages,
by possibly showing that $L_\#$ is \nrcfl-simple. 
Another important challenge for further research is looking for $\mathcal{C}$-simple problems for other complexity classes $\mathcal{C}$ and suitable reductions. This could provide an effective tool for strengthening lower-bounds results known for single problems to the whole classes of problems, which deserves a deeper study.

\subsection*{Acknowledgements}

Presented research has been partially supported by the Czech
Science Foundation, grant GA19-05704S, and 
by the institutional support RVO: 67985807 (J.\ \v{S}\'{\i}ma).
J.\ \v{S}\'{\i}ma also thanks Martin Pl\'atek 
for his intensive collaboration at the first stages of this research.

\bibliography{dcflsimpl}

\begin{thebibliography}{10}
\providecommand{\url}[1]{\texttt{#1}}
\providecommand{\urlprefix}{URL }
\providecommand{\doi}[1]{https://doi.org/#1}

\bibitem{AnabtawiHKZ19}
Anabtawi, M., Hassan, S., Kapoutsis, C.A., Zakzok, M.: An oracle hierarchy for
  small one-way finite automata. In: Proceedings of {LATA} 2019. pp. 57--69.
  LNCS 11417, Springer (2019). \doi{10.1007/978-3-030-13435-8\_4}

\bibitem{DBLP:journals/siamcomp/Greibach73}
Greibach, S.A.: The hardest context-free language. {SIAM} J. Comput.
  \textbf{2}(4),  304--310 (1973). \doi{10.1137/0202025}

\bibitem{HoUl69}
Hopcroft, J.E., Ullman, J.D.: Formal languages and their relation to automata.
  Addison-Wesley (1969), \url{https://www.worldcat.org/oclc/00005012}

\bibitem{Jancar20}
Jan\v{c}ar, P.: Deciding semantic finiteness of pushdown processes and
  first-order grammars w.r.t. bisimulation equivalence. J. Comput. Syst. Sci.
  \textbf{109},  22--44 (2020). \doi{10.1016/j.jcss.2019.10.002}

\bibitem{JancarMPV99}
Jan\v{c}ar, P., Mr{\'{a}}z, F., Pl{\'{a}}tek, M., Vogel, J.: Restarting
  automata. In: Proceedings of {FCT} 1995. pp. 283--292. LNCS 965, Springer
  (1995). \doi{10.1007/3-540-60249-6\_60}

\bibitem{MrazPPS20}
Mr{\'{a}}z, F., Pardubsk{\'{a}}, D., Pl{\'{a}}tek, M., \v{S}{\'{\i}}ma, J.:
  Pumping deterministic monotone restarting automata and {DCFL}. In:
  Proceedings of {ITAT} 2020. pp. 51--58. {CEUR} Workshop Proceedings 2718
  (2020), \url{http://ceur-ws.org/Vol-2718/paper13.pdf}

\bibitem{Reinhardt90}
Reinhardt, K.: Hierarchies over the context-free languages. In: Proceedings of
  {IMYCS} 1990. pp. 214--224. LNCS 464, Springer (1990).
  \doi{10.1007/3-540-53414-8\_44}

\bibitem{Siegelmann99}
Siegelmann, H.T.: Neural networks and analog computation -- {B}eyond the Turing
  limit. Birkh{\"{a}}user (1999)

\bibitem{Sima20}
\v{S}{\'{\i}}ma, J.: Analog neuron hierarchy. Neural Netw.  \textbf{128},
  199--215 (2020). \doi{10.1016/j.neunet.2020.05.006}

\bibitem{Sima21}
\v{S}{\'{\i}}ma, J.: Stronger separation of analog neuron hierarchy by
  deterministic context-free languages (2021), \url{arXiv:2102.01633}
  (submitted to a journal)

\bibitem{SimaO03}
\v{S}{\'{\i}}ma, J., Orponen, P.: General-purpose computation with neural
  networks: {A} survey of complexity theoretic results. Neural Comput.
  \textbf{15}(12),  2727--2778 (2003). \doi{10.1162/089976603322518731}

\bibitem{SimaP19}
\v{S}{\'{\i}}ma, J., Pl{\'{a}}tek, M.: One analog neuron cannot recognize
  deterministic context-free languages. In: Proceedings of {ICONIP} 2019, Part
  {III}. pp. 77--89. LNCS 11955, Springer (2019).
  \doi{10.1007/978-3-030-36718-3\_7}

\bibitem{Yamakami14}
Yamakami, T.: Oracle pushdown automata, nondeterministic reducibilities, and
  the hierarchy over the family of context-free languages. In: Proceedings of
  {SOFSEM} 2014. pp. 514--525. LNCS 8327, Springer (2014).
  \doi{10.1007/978-3-319-04298-5\_45}, (full version \url{arXiv:1303.1717})

\end{thebibliography}
\end{document}